\documentclass[%
 reprint,
 amsmath,amssymb,
 aps,
]{revtex4-2}

\usepackage{float}
\usepackage{graphicx}
\usepackage{dcolumn}
\usepackage{bm}
\usepackage{subcaption}

\usepackage{amsmath}
\usepackage{amsthm}

\newtheorem{theorem}{Theorem}
\newtheorem{lemma}[theorem]{Lemma} 

\usepackage{hyperref}

\begin{document}


\title{Challenges in Barren Plateau Mitigation with\\
Dynamic Parameterized Quantum Circuits}

\author{Sumeet Shirgure}
 \email{sumeetparameshwar.shirgure@ucf.edu}
\author{Efekan K\"okc\"u}
 \email{efekan.kokcu@ucf.edu}
\author{Siyuan Niu}%
 \email{siyuan.niu@ucf.edu}
 \affiliation{Department of Electrical and Computer Engineering,
 University of Central Florida.}

\date{\today}

\begin{abstract}
Variational quantum algorithms (VQAs) are a promising paradigm for quantum
advantage, yet their trainability is severely hampered by \emph{barren plateaus} (BPs).
Several recent works have proposed \emph{dynamic parameterized quantum circuits} (DPQCs), which interleave unitary layers with parameterized CPTP maps, such as engineered dissipation, feedforward gadgets, and periodic resets, as a possible strategy for mitigating BPs.
We unify this class of circuits into a formalization for DPQCs.
We identify constraints on the nature and the structure
of DPQCs if they are to prevent a significant number of parameters
from becoming untrainable.
Using purification and Pauli-path analysis, we further identify a mechanism by which the cost function can remain anti-concentrated even when many parameters remain untrainable.
Our analysis reveals ways to design DPQCs that do not have an exponentially concentrated
cost function, and our results suggest that BP mitigation via DPQCs is at least
as hard as designing BP-free unitaries.
\end{abstract}

\maketitle

\section{Introduction}\label{sec:introduction}

Variational quantum algorithms (VQAs) \cite{cerezo2021variational} are among the most
actively pursued strategies for quantum optimization.
They provide a flexible, general-purpose framework: the solution to a problem is encoded
in the minimum of a cost function, which is typically the expectation value
of a problem observable measured on the quantum state that a parameterized quantum circuit
(PQC), also called an \emph{ansatz}, prepares from a fixed initial state.
A classical optimizer then searches for the parameter vector that minimizes the cost,
and the resulting state, or its observed eigenvalue, provides the answer.
This way information is fed back and forth between a classical optimizer
that tunes the parameters and a quantum computer that computes
expectation values for a given set of parameters.
This hybrid quantum--classical recipe underlies many prominent algorithms, including
the Quantum Approximate Optimization Algorithm (QAOA) \cite{farhi2014quantum} and the
Variational Quantum Eigensolver (VQE)
\cite{peruzzo2014variational,tilly2022variational}.

The central obstacle to scaling VQAs is the \emph{barren plateau} (BP) phenomenon
\cite{mcclean2018barren, larocca2025barren}.
In a barren plateau regime the variance of the cost-function gradient with respect to a
parameter shrinks exponentially in the number of qubits $n$; such a
parameter is said to be \emph{untrainable}, because the number of circuit evaluations
needed to resolve its gradient from statistical noise grows exponentially, rendering
classical optimization intractable.
Barren plateaus have several distinct origins \cite{larocca2025barren}.
They can be induced by an overly expressive circuit \cite{mcclean2018barren}, by the choice
of input state and measurement operator \cite{ragone2024lie,cerezo2021cost}, or by
noise \cite{wang2021noise}.
Ultimately, they reflect the curse of dimensionality of an exponentially large Hilbert space.

A closely related quantity is the spread of the cost function itself.
It was shown in \cite{arrasmith2022equivalence} that if \emph{every} partial gradient
in a PQC is exponentially suppressed, then the cost function also concentrates
exponentially tightly around its mean.
(We make these definitions and concepts mathematically precise in section ~\ref{sec:preliminaries}.)
The converse, however, is weaker: showing that the cost function does \emph{not}
concentrate, for instance by lower-bounding the cost function variance across the parameter
space \cite{deshpande2024dynamic,Crognaletti2024}, only guarantees that \emph{some} gradients
are non-negligible on average, not that all of them are.
In this work we investigate which gradients are concentrated and which are not.
Cost-function anti-concentration is therefore a \emph{necessary} but not a \emph{sufficient}
condition for trainability, a distinction that will be central to our analysis.

Many strategies have been proposed to avoid or mitigate BPs, and most of them are focused on \emph{unitary} VQAs
\cite{larocca2025barren}. These strategies include: (1) using \emph{shallow circuits}, since deep circuits can approximate $2$-designs and hence are very expressive and have exponentially
small gradients \cite{mcclean2018barren}; (2) exploiting circuit or observable symmetries to obtain \emph{small dynamical Lie algebras} and thereby reduce effective expressivity~\cite{ragone2024lie}; (3) adopt \emph{local cost functions} , which generally exhibit more favorable trainability than global cost functions~\cite{cerezo2021cost}; (4) employing \emph{informed initialization strategies}, such as initializing the optimizer with carefully chosen parameters, e.g., small-angle parameters, rather than randomly selected values~\cite{zhang2022escaping,wang2024trainability,sack2022avoiding,park2024hardware}, or using warm-start methods that initialize the optimization near a good guess to the solution,
obtained by other methods ~\cite{puig2025,mhiri2025,zunkovic2026}.

Beyond these unitary techniques, a separate line of work mitigates BPs with explicitly
\emph{non-unitary} operations. These include: (1)\textit{Engineered dissipation:} for shallow state-preparation circuits, tailored
dissipation \cite{sannia2024engineered,cichy2024perturbative} can effectively turn a
global observable into a local one, mitigating BPs. (2) \textit{Feedforward gadgets:} measuring ancilla qubits and conditionally applying gates
based on the outcome \cite{deshpande2024dynamic} can produce cost-function
anti-concentration. (3) \textit{Periodic resets:} probabilistically resetting qubits can anti-concentrate
the cost function and protect the gradients of parameters in the final $O(\log(n))$ layers of the
ansatz \cite{zapusek2025scaling}, where $n$ is the number of qubits. (4) \textit{Mutating the unitary with non-unitary gadgets:} entangling the system with fresh
ancillas and then discarding them can restore trainability of parameters that follow the
gadget, provided the gadget is placed before the last $O(\log(n))$ layers
\cite{chen2025taming}.

The unitary approaches have been explored extensively, and are well understood 
\cite{larocca2025barren, ragone2024lie}.
The study of non-unitary approaches however has been disparate and
is not as explored in depth as their unitary counterparts.
This work aims to close that gap by providing a unifying perspective
on non-unitary approaches to BP mitigation, their potential, and their limitations.
For simplicity, the non-unitary channels explored in this work are assumed to be precisely engineered and error-free, since noise can render VQAs classically simulable under a range of conditions~\cite{gonzalez2025pauli,mele2026noise,cerezo2025does}.

Our contributions are as follows :
We unify the non-unitary proposals above into a single formalism, that we call
\emph{dynamic parameterized quantum circuits} (DPQCs),
and study the challenges in BP mitigation within it.
We prove two results establishing conditions on the properties and placement of non-unitary elements that are necessary for a substantial fraction of the parameters in a DPQC to remain trainable. We then develop a method based on Pauli-path propagation to determine whether the inserted non-unitary gadgets can induce cost-function anti-concentration. The method is computationally efficient when the number of Pauli terms explored by the algorithm remains tractable. Our construction demonstrates that a DPQC can exhibit cost-function anti-concentration while still containing a large number of untrainable parameters, showing that anti-concentration alone is insufficient to guarantee trainability across the entire circuit. We illustrate our theoretical results numerically using two practically relevant DPQCs, examining both their trainability and optimization landscapes. Finally, we argue that designing a BP-free DPQC is at least as difficult as constructing a BP-free unitary ansatz from first principles.

\section{Results}\label{sec:results}

\subsection{Preliminaries}
\label{sec:preliminaries}
In this section, we provide the formal statements that underlie the discussion above.

A VQA encodes a problem in the minimum of a cost function $C$.
While $C$ could be of various forms, and even possibly non-linear,
we will consider the following cost function: 
\begin{equation}
    C(\theta) = \mathrm{Tr}[H \hat{U}(\theta) \rho_0 \hat{U}^\dagger(\theta)]
    \label{eqn:cost}
\end{equation}
where $\rho_0$ is the $n$-qubit initial state, $H$ is the observable of interest,
and $\hat{U}(\theta)$ is the ansatz. We consider an ansatz that consists of $L$ blocks :,
\begin{equation}
    \hat{U}(\theta) = \hat{U}_L(\theta_L)...\hat{U}_2(\theta_2)\hat{U}_1(\theta_1)
    \label{eqn:ansatz}
\end{equation}
and each block has the following form
\begin{equation}
    \hat{U}_j(\theta_j) = \prod_{k} e^{-i\theta_{j,k} P_{j,k} / 2} \hat{W}_{j,k}
    \label{eqn:unitary}
\end{equation}
where $\hat{W}_{j,k}$ are non-parametrized unitaries (such as Hadamard or CNOT gates),
and the generators $P_{j,k}$ are local with a bounded Pauli weight, and they square
to the identity i.e. $P_{j,k}^2 = \mathrm{I}$.
This makes $C$ periodic in every component of $\theta$, and the standard
parameter shift rules for evaluating gradients apply
\cite{mitarai2018quantum,wierichs2022general}.
Note that we denote the unitary operation with a hat : $\hat{U}(\theta)$,
while we denote the unitary channel generated by it
$U(\theta):\rho \rightarrow \hat{U}(\theta) \rho \hat{U}(\theta)^\dagger$
without the hat. We denote a parameterized channel acting on a Hermitian operator
$\rho$ with the following notation $U(\theta)[\rho]$,
where the round brackets show the parameter and the square bracket
the input to the channel.

The classical optimizer returns $\theta^* = \arg\min_\theta C(\theta)$, and the
corresponding unitary is used to generate the quantum states of interest.
The part of the process that iteratively keeps updating the parameters is called "training".
An \emph{untrainable} parameter $\theta_\mu$ is one whose partial gradient is exponentially
small in $n$ on average :
\begin{equation}
    \mathrm{Var}_\theta[\partial_{\theta_\mu} C(\theta)]\in O(b^{-n}), \quad b>1
    \label{eqn:bp}
\end{equation}
A useful companion statement, proved in \cite{arrasmith2022equivalence},
is that for cost functions of the kind given in Eq. \ref{eqn:cost}, when all
partial gradients are exponentially suppressed the cost itself concentrates about its mean,
\begin{equation}
  \Pr_\theta [ \left| C(\theta) - \mathbb{E}_\phi [ C(\phi) ] \right| > \delta ] \in O(b^{-n})/\delta^2
  \label{eqn:cost_conc}
\end{equation}
where the probability is taken with respect to the uniform distribution over the parameter
space.
As emphasized in section \ref{sec:introduction}, the implication runs only
one way: anti-concentration of $C$ is necessary for trainability but does
not by itself guarantee it.

\subsection{Formalism}
\label{sec:formalism}
In this section we provide some definitions needed to construct and
describe non-unitary ansatzes.

\begin{figure*}[t]
    \begin{subfigure}[t]{0.45\textwidth}
        \centering
        \includegraphics[height=1.25in]{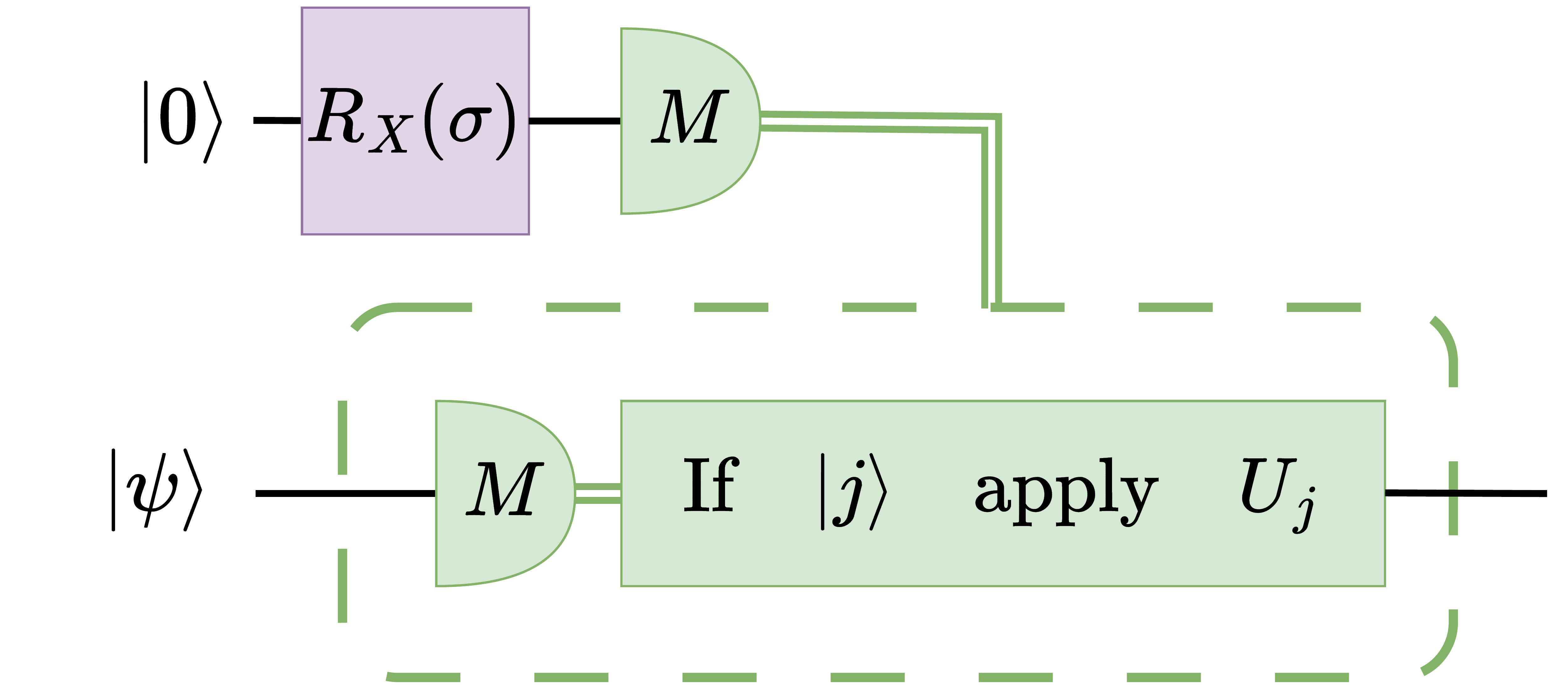}
        \caption{}
        \label{fig:feedforward_gadget}
    \end{subfigure}
    \hspace{10pt}
    \begin{subfigure}[t]{0.45\textwidth}
        \centering
        \includegraphics[height=1.25in]{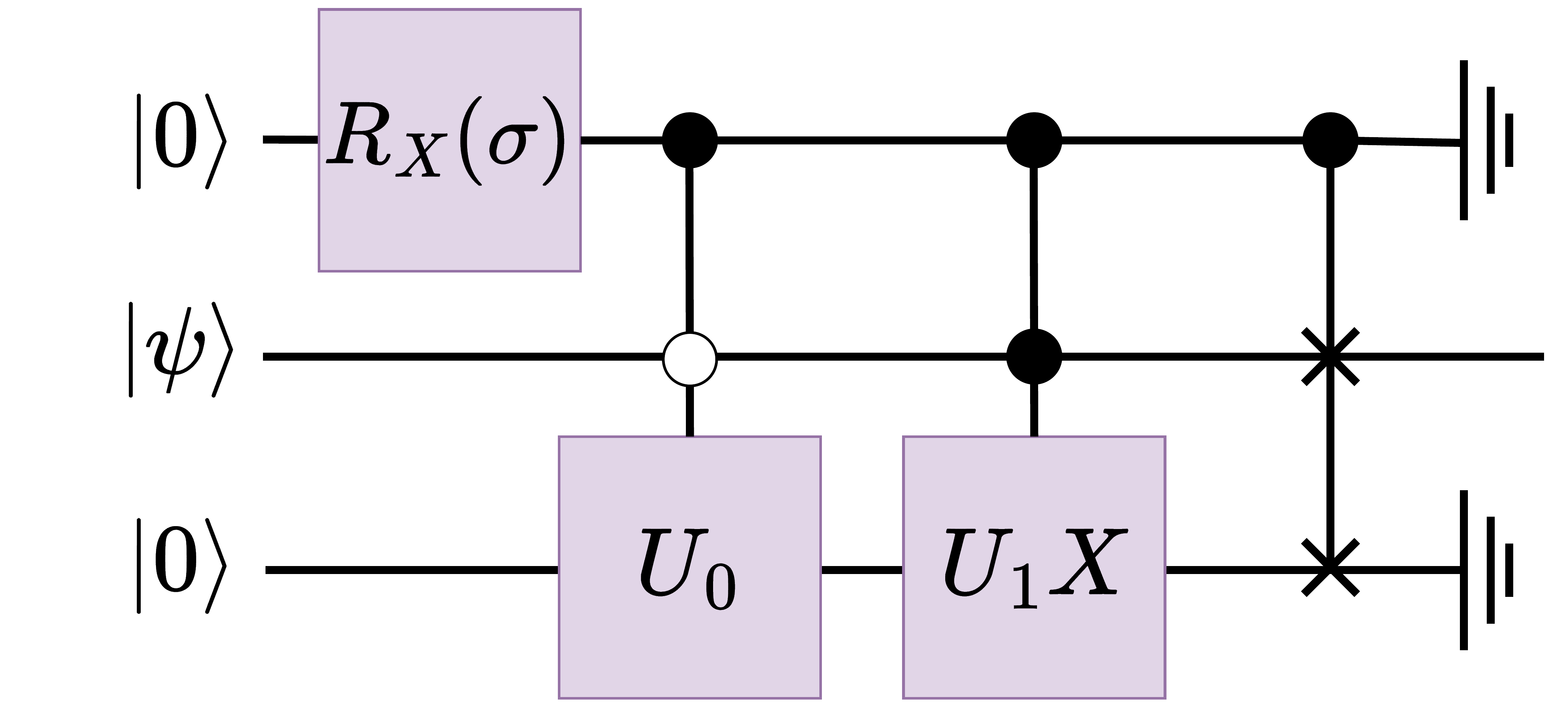}
        \caption{
        }
        \label{fig:purified_feedforward_gadget}
    \end{subfigure}
    \caption{
        An example of a non-unitary gadget.
        (\ref{fig:feedforward_gadget})
        The $\sigma$-parameterized ``feedforward gadget" from
        \cite{deshpande2024dynamic}.
        The sub-circuit in the dotted green box is applied conditionally
        on the measurement outcome of the ancilla at the top.
        We set $U_0=\mathrm{I},U_1=\frac{I-iX}{\sqrt{2}}$.
        (\ref{fig:purified_feedforward_gadget})
        The purified implementation of the gadget, using an additional ancilla
        at the bottom.
        The symbols at the right end of the ancilla lines are ``discard" operations.
        A probabilistic reset can be considered a specialization of this gadget
        with $U_0=U_1X=\mathrm{I}$ and reset probability $\sin^2(\sigma/2)$.
    }
    \label{fig:non_unitary_gadgets}
\end{figure*}

\textit{Parameterized dynamic gadgets.}
The basic non-unitary building block is defined as a \emph{parameterized dynamic gadget}:
a local CPTP map $\mathcal{E}_{j,k}(\sigma_{j,k})$ that acts on $O(1)$ qubits
and depends on a parameter $\sigma_{j,k}$.
Physically, the map is realized by bringing in a few ancilla qubits,
entangling them with the system qubits by a unitary, and then discarding the ancillas.
By Stinespring's dilation theorem \cite{stinespring_dilation,Nielsen2000} this
construction can in principle implement \emph{any} CPTP map;
however for practical purposes we restrict our attention to those that
require only $O(1)$ ancillas.
Fig.~\ref{fig:non_unitary_gadgets} shows a concrete example.

\textit{Dynamic parameterized quantum circuits (DPQCs).}
A DPQC $\mathcal{U}(\theta,\sigma)$ is a quantum channel obtained
from an ordinary PQC $U(\theta)$ (Eq.~\ref{eqn:ansatz}) by
inserting $M$ layers of non-unitary maps $\mathcal{E}_j(\sigma_j)$
between its unitary blocks (Eq. ~\ref{eqn:dpqc}),
\begin{widetext}

\begin{equation}
    \mathcal{U}(\theta,\sigma) = 
    \mathcal{E}_M(\sigma_M) \circ U_L(\theta_L) \circ
    ...\mathcal{E}_j(\sigma_j) \circ U_{v_{j}}(\theta_{v_{j}})...\circ U_{v_{j-1}+1}(\theta_{v_{j-1}+1})
    \circ \mathcal{E}_{j-1}(\sigma_{j-1})...\circ U_1(\theta_1)
    \label{eqn:dpqc}
\end{equation}

\end{widetext}
where each inserted layer is itself a composition of dynamic gadgets,
\begin{equation}
    \mathcal{E}_j(\sigma_j) =
    \mathcal{E}_{j,l_j}(\sigma_{j,l_j})\circ...\circ \mathcal{E}_{j,1}(\sigma_{j,1})
    \label{eqn:cptp_composition}
\end{equation}
The cost function is defined exactly as before,
now with respect to the modified channel,
\begin{equation}
    C(\theta, \sigma) = \mathrm{Tr}[H\ \mathcal{U}(\theta,\sigma)[\rho_0]]
    \label{eqn:dpqc_cost}
\end{equation}

\textit{Preserving expressivity.}
An unconstrained CPTP map can reduce the expressivity drastically,
since CPTP maps such as a reset channel can erase any quantum information stored.
In order to guarantee that $\mathcal{U}$ is at least as expressive as the unitary PQC
$U$, we require each gadget to reduce to the identity channel when its parameter is
equal to $0$, i.e. $\mathcal{E}_{j,k}(0)=\mathrm{I}$.
The parameter $\sigma$ then smoothly interpolates between the unitary
circuit $U$ (at $\sigma=0$) and the fully dynamic circuit $\mathcal{U}$,
thus guaranteeing that DPQC is at least as expressive as the PQC.

We note that existing works \cite{deshpande2024dynamic,chen2025taming}
also deal with gadgets that are capable of realizing the reset channel.
Therefore such expressivity arguments are only valid for small $\sigma$,
and for larger values the ansatz may potentially lose its expressivity.

\textit{Faithfulness.}
The inserted dynamic layers can also reshape the cost landscape and
destroy the encoding the original PQC carried.
We call a DPQC $\mathcal{U}$ \emph{faithful} to $U$ for a given
$(\rho_0, H)$ if changing the parameter slightly perturbs the cost $C$
\emph{uniformly} by at most an exponentially small amount at each point:
\begin{equation}
\begin{split}
  \left|C(\theta,\sigma) - C(\theta,0)\right| \in \mathcal{O}(c^{-n}), c>1\\
  \forall\theta_\mu \in [-\pi,\pi),\forall \sigma:\ ||\sigma|| < \epsilon \in
        \Theta \left( \frac{1}{poly(n)} \right)
\end{split}
  \label{eqn:faithful}
\end{equation}
This is a stringent requirement, but it is the natural one here: the original
cost is \emph{already} exponentially concentrated (Eq.~\ref{eqn:cost_conc}), so any change
larger than $O(c^{-n})$ could deform the landscape into one that no longer encodes
the original problem in a faithful manner.
This definition allows us to critically examine the tempting claim that a DPQC $\mathcal{U}$ can simultaneously remain faithful to the original problem, be as expressive as the unitary PQC $U$, and be trainable.
As noted earlier, the expressivity argument holds only for small $\sigma$;
but we will show that small $\sigma$ is precisely the regime in which
a faithful $\mathcal{U}$ resembles $U$ and therefore inherits its BPs.

\subsection{Number of untrainable parameters in DPQCs}
\label{sec:dpqc_trainability}

\begin{lemma}[Untrainability of faithful DPQCs]
    If $(U,H,\rho_0)$ exhibits a barren plateau in the
    sense of Eqs.~\ref{eqn:bp} and \ref{eqn:cost_conc}, then
    for a faithful DPQC $\quad \mathcal{U}(\theta,\sigma)$ constructed from
    $U(\theta)$ as defined in Eq. \ref{eqn:faithful},
    if $U$ has no trainable parameters with respect to $(H, \rho_0)$,
    then no $\theta$ parameter is trainable in $\mathcal{U}$.
    \label{lemma:faithful_dpqc}
\end{lemma}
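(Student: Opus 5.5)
We want to show that for every $\theta_\mu$ and every $\sigma$ with $\|\sigma\|<\epsilon$, $\mathrm{Var}_\theta[\partial_{\theta_\mu} C(\theta,\sigma)]\in O(b'^{-n})$ for some $b'>1$. The plan is to compare the DPQC gradient directly with the unitary gradient at $\sigma=0$. Faithfulness gives closeness of the cost functions, not of their derivatives. The bridge is the parameter-shift rule, which applies because each generator $P_{j,k}$ squares to the identity. Inserting the CPTP layers $\mathcal{E}_j(\sigma_j)$ between unitary blocks leaves each rotation $e^{-i\theta_\mu P_\mu/2}$ untouched. Hence $C(\theta,\sigma)$ is still a trigonometric polynomial of degree one in $\theta_\mu$ with all other parameters fixed, and
\begin{equation}
\partial_{\theta_\mu} C(\theta,\sigma) = \tfrac{1}{2}\left[ C(\theta^{+}_\mu,\sigma) - C(\theta^{-}_\mu,\sigma)\right],
\end{equation}
where $\theta^{\pm}_\mu$ denotes $\theta$ with $\theta_\mu$ shifted by $\pm\pi/2$. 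The same identity holds at $\sigma=0$.

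\textbf{Step 1: pointwise gradient closeness.} Subtracting the two parameter-shift identities gives
\begin{equation}
\left|\partial_{\theta_\mu}C(\theta,\sigma)-\partial_{\theta_\mu}C(\theta,0)\right| \le \tfrac12\left|C(\theta^+_\mu,\sigma)-C(\theta^+_\mu,0)\right| + \tfrac12\left|C(\theta^-_\mu,\sigma)-C(\theta^-_\mu,0)\right|.
\end{equation}
Faithfulness (Eq.~\ref{eqn:faithful}) holds uniformly over all $\theta$ in the periodic domain, so it applies at the shifted points. Each term on the right is therefore $O(c^{-n})$, and we get the uniform bound $\Delta(\theta):=\partial_{\theta_\mu}C(\theta,\sigma)-\partial_{\theta_\mu}C(\theta,0)\in O(c^{-n})$. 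This uniformity in $\theta$ is the main point to check. A bound holding only on average, or only at the unshifted point, would not suffice. We rely on the definition quantifying over all $\theta_\mu\in[-\pi,\pi)$, read as all of $\theta$.

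\textbf{Step 2: variance transfer.} Write $\partial_{\theta_\mu}C(\theta,\sigma)=\partial_{\theta_\mu}C(\theta,0)+\Delta(\theta)$ and use $\mathrm{Var}[A+B]\le 2\mathrm{Var}[A]+2\mathrm{Var}[B]$ with $\mathrm{Var}[B]\le \sup|B|^2$. This gives
\begin{equation}
\mathrm{Var}_\theta[\partial_{\theta_\mu}C(\theta,\sigma)]\le 2\,\mathrm{Var}_\theta[\partial_{\theta_\mu}C(\theta,0)] + O(c^{-2n}).
\end{equation}
By hypothesis no parameter of $U$ is trainable, so the first term is $O(b^{-n})$ by Eq.~\ref{eqn:bp}. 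The sum is therefore $O(\min(b,c^2)^{-n})$, which is exponentially small. This holds for every $\mu$, so no $\theta$ parameter of $\mathcal{U}$ is trainable. Eq.~\ref{eqn:cost_conc} is not needed beyond motivating the hypothesis, though the same argument also transfers cost concentration.

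\textbf{Main obstacle.} The delicate step is Step 1. It requires the parameter-shift rule to survive the insertion of the $\mathcal{E}_j$. This holds because $\mathcal{U}$ is linear in each unitary channel $U_\mu(\theta_\mu)$, and that channel depends on $\theta_\mu$ only through $\cos\theta_\mu$ and $\sin\theta_\mu$ terms, so the $\mathcal{E}_j$ do not change the dependence. We would also state explicitly that the $\sigma$-dependence is held fixed while $\theta$ is averaged. The conclusion holds for each fixed $\sigma$ in the faithful ball, which is exactly the small-$\sigma$ regime in which expressivity is guaranteed.
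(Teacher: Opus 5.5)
Your proposal is correct and follows essentially the same route as the paper: the parameter-shift identity at $\theta_\mu\pm\pi/2$, the triangle inequality, and the uniform-in-$\theta$ faithfulness bound give a pointwise $O(c^{-n})$ bound on $|\partial_{\theta_\mu}C(\theta,\sigma)-\partial_{\theta_\mu}C(\theta,0)|$. You go slightly further than the paper in two places, both of which the paper leaves to a one-line remark or a citation: you make the variance-transfer step explicit via $\mathrm{Var}[A+B]\le 2\mathrm{Var}[A]+2\mathrm{Var}[B]$, and you justify the parameter-shift rule by linearity of the circuit in each rotation channel rather than by purification.
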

Lemma~\ref{lemma:faithful_dpqc} (proved in Appendix \ref{sec:methods})
states that faithfulness and trainability cannot
be reconciled for free: a faithful DPQC inherits the untrainability of the unitary
it was built from. Therefore, the augmented cost function $C(\theta,\sigma)$ can overcome this limitation only if the new $\sigma$  directions encode information relevant to the solution of the original problem, rather than merely preserving the features of the original cost landscape.  Such a modification requires giving up
faithfulness.
Consequently, constructing DPQCs that are both expressive and trainable generally requires sacrificing faithfulness to the original unitary PQC.
The practical consequence is that effective DPQCs must be tailored, with an inductive bias
toward the problem at hand; a single ``one-size-fits-all" dynamic gadget is unlikely to
mitigate BPs across arbitrary VQAs.

A second limitation concerns \emph{how often} the gadgets are inserted.
If two consecutive non-unitary layers are separated by a long stretch of unitary blocks
$\hat{U}_{v_j}(\theta_{v_j})...\hat{U}_{v_{{j-1}}+1}(\theta_{v_{{j-1}}+1})$,
that entire stretch can become untrainable if it forms a 2-design.

\begin{lemma}[Untrainability of sparse DPQCs]
    If any sufficiently long sub-array of $U(\theta)$, namely
    $\hat{U}_{j+k-1}(\theta_{j+k-1})...\hat{U}_j(\theta_j),
    k\in\Omega(L)$ as defined in Eq. \ref{eqn:ansatz} forms a 2-design,
    then for a DPQC $\quad \mathcal{U}(\theta,\sigma)$ constructed from
    $U(\theta)$ as defined in Eq \ref{eqn:dpqc},
    if $v_j-v_{j-1}\in \Omega(L)$ then no parameter in the
    $U_{v_{j}}(\theta_{v_{j}})...U_{v_{j-1}+1}(\theta_{v_{j-1}+1})$
    part of the dynamic circuit is trainable.
    \label{lemma:sparse_dpqc}
\end{lemma}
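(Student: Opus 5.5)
The plan is to reduce the statement to the standard unitary barren-plateau argument for a 2-design sandwiched between fixed channels. We fix all parameters outside the stretch $S = U_{v_j}(\theta_{v_j})\cdots U_{v_{j-1}+1}(\theta_{v_{j-1}+1})$: the $\theta$'s of the other blocks and all $\sigma$'s. We then write the cost as
\begin{equation}
C(\theta,\sigma) = \mathrm{Tr}\big[ \tilde H\, S(\theta_S)[\tilde\rho] \big],
\end{equation}
where $\tilde\rho = \mathcal{E}_{j-1}(\sigma_{j-1})\circ\cdots[\rho_0]$ is the (generally mixed) state entering the stretch. The effective observable is $\tilde H = \mathcal{A}^\dagger[H]$, with $\mathcal{A}$ the CPTP map composed of everything after the stretch, starting with $\mathcal{E}_j(\sigma_j)$. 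Since $\mathcal{A}^\dagger$ is unital and completely positive, it is contractive in operator norm, so $\|\tilde H\|_\infty \le \|H\|_\infty$. Moreover $\tilde\rho$ is a valid density matrix, so $\mathrm{Tr}[\tilde\rho^2]\le 1$.

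For a parameter $\theta_\mu$ inside $S$, we split $S = S_B\, e^{-i\theta_\mu P_\mu/2}\, S_A$. Because $v_j - v_{j-1} \in \Omega(L)$ and, by hypothesis, any sub-array of length $\Omega(L)$ forms a 2-design, we aim to argue that at least one of $S_A$ or $S_B$ is an $\Omega(L)$-long stretch and hence a 2-design. This is the same situation as in McClean et al.\ \cite{mcclean2018barren}, with the cost replaced by the fixed-endpoint quantity above. Applying the Weingarten calculus for the 2-design side gives the usual bound
\begin{equation}
\mathrm{Var}_{\theta}[\partial_{\theta_\mu} C] \lesssim \frac{\|\tilde H\|_2^2\,\mathrm{Tr}[\tilde\rho^2]}{4^n} \cdot \mathrm{poly}(n) \le \frac{2^n\|H\|_\infty^2}{4^n}\,\mathrm{poly}(n) \in O(2^{-n}).
\end{equation}
This bound holds uniformly in the fixed outer parameters. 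Averaging over them, the law of total variance with an exponentially small conditional mean derivative extends it to the full variance. We therefore obtain $O(b^{-n})$, which is Eq.~\ref{eqn:bp}, for every $\theta_\mu$ in the stretch.

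The main obstacle is the split into $S_A$ and $S_B$. A parameter near the boundary of the stretch may have only $O(1)$ blocks on one side, and the other side must then be long enough to be a 2-design. Since the stretch length is $\Omega(L)$, one side always has length at least half the stretch, which is still $\Omega(L)$, and the hypothesis covers it. If the 2-design is $S_A$, we use the input-side integral, which involves $\mathrm{Tr}[\tilde\rho^2]$ and $\|[P_\mu, S_B^\dagger \tilde H S_B]\|_2^2$. If it is $S_B$, we use the observable-side integral, which involves $\|\tilde H\|_2^2$ and $\|[P_\mu,S_A\tilde\rho S_A^\dagger]\|_2^2$. In both cases the factor $1/(4^n-1)$ controls the result, because the Hilbert–Schmidt norms are at most $2^{n}\|H\|_\infty^2$ and $1$ respectively.

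A second subtlety is that a 2-design statement is about the distribution over $\theta$, while the non-unitary layers make $\tilde H$ and $\tilde\rho$ non-unitary images. The bounds above use only positivity, trace preservation and unitality of the adjoint, so no structure of the gadgets beyond being CPTP is needed.
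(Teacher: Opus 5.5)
Your proposal is correct and follows essentially the same route as the paper. Both absorb everything outside the stretch into an effective state $\rho_L$ and an effective observable $H_R=\mathcal{E}_R^\dagger[H]$ (operator-norm contractive by unitality), split at $\theta_\mu$ so that one side of length $\Omega(L)$ is a 2-design, apply the McClean et al.\ one-sided 2-design variance formula with the commutator norm bounds to get $O(\|H\|_{op}^2/2^n)$ uniformly in the outer parameters, and average via the law of total variance. The only difference is cosmetic: the paper notes that the conditional mean of $\partial_{\theta_\mu}C$ vanishes exactly by periodicity, whereas you only need it to be exponentially small.
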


Lemma~\ref{lemma:sparse_dpqc} rules out a popular design choice:
inserting only a constant number $M=O(1)$ of gadget layers to mitigate BPs.
This is considered for example in
\cite{deshpande2024dynamic,chen2025taming}.
It cannot prevent a significant fraction of the parameters from becoming untrainable.
Counting conservatively, if each layer $U_j(\theta_j)$ carries $\Omega(n)$ parameters,
this leaves potentially $\Omega(nL)$ parameters stranded on a plateau.
The key assumption, that any long sub-array of layers forms a $2$-design,
is mild in practice because most parameterized circuit families are
translation invariant: every sufficiently long window of layers has the
same Haar-random properties.
Translationally invariant families include deep random circuits
\cite{harrow2009random}, the hardware-efficient ansatz (HEA)
\cite{kandala2017hardware}, the QAOA ansatz \cite{farhi2014quantum},
and Trotterized variational versions of the Unitary Coupled Cluster ansatz
\cite{lee2018generalized,anand2022quantum}.
Of these, random circuits and HEA are known to form $2$-designs once deep enough
\cite{cerezo2021cost}, and Max-Cut QAOA is known to have an exponentially large
dimensional Lie algebra for most graphs
\cite{mao2025qaoa,kazi2025analyzing,kokcu2024classification}.

\subsection{Pauli path analysis}
\label{sec:pauli_path_analysis}

In this section, we provide a powerful tool to analyze the existing
forms of mitigating BPs in DPQCs.
In the same spirit as unifying the existing mitigation techniques under the DPQC
formalism, this tool is a unified analysis method that works on all said techniques.
It is based on \emph{Pauli-path analysis}
~\cite{bravyi2016improved,aharonov2023polynomial,gonzalez2025pauli,angrisani2025classically},
which works by expressing the quantum observables in the Pauli basis.
The idea is to evolve the observable in the Heisenberg picture and track it
as a linear combination of Pauli strings.

Since it is a popular design choice,
we focus on a single non-unitary layer $\mathcal{E}(\sigma)$ placed somewhere inside the
circuit, and ask how far from the final measurement it can sit.
Any unitary acting \emph{after} the dynamic layer can be absorbed into it, so without loss
of generality we write
\begin{equation}
    \mathcal{U}(\theta,\sigma)=\mathcal{E}(\sigma)\circ U(\theta)
    \label{eqn:simplified_dpqc}
\end{equation}

Note that the $\sigma$ and $\theta$ parameters here denote different things,
than what they did in Eq. \ref{eqn:dpqc}. We have "absorbed" all of the parameters
in the dynamic layer and the unitary coming after it into $\sigma$, and the
rest of the parameters are denoted by $\theta$.

To bring this into the Pauli-path framework we purify \cite{stinespring_dilation} the
non-unitary channel $\mathcal{E}(\sigma)$ into a unitary channel $V(\sigma)$ acting on the
enlarged Hilbert space $\mathcal{H}_s\otimes\mathcal{H}_a$, where $\mathcal{H}_s$ holds the
system qubits and $\mathcal{H}_a$ the ancillas.
Initializing the ancillas in $|0\rangle$ and discarding them at the end reproduces the
original DPQC from Eq. \ref{eqn:simplified_dpqc},
\begin{equation}
    C(\theta,\sigma) =
    \mathrm{Tr}[H\
    \mathrm{Tr}_a[V(\sigma)\circ U(\theta) [\rho_0\otimes |0\rangle\langle0|_a]]]
    \label{eqn:purified_dpqc}
\end{equation}

\begin{figure}[t]
    \centering
    \includegraphics[width=0.9\linewidth]{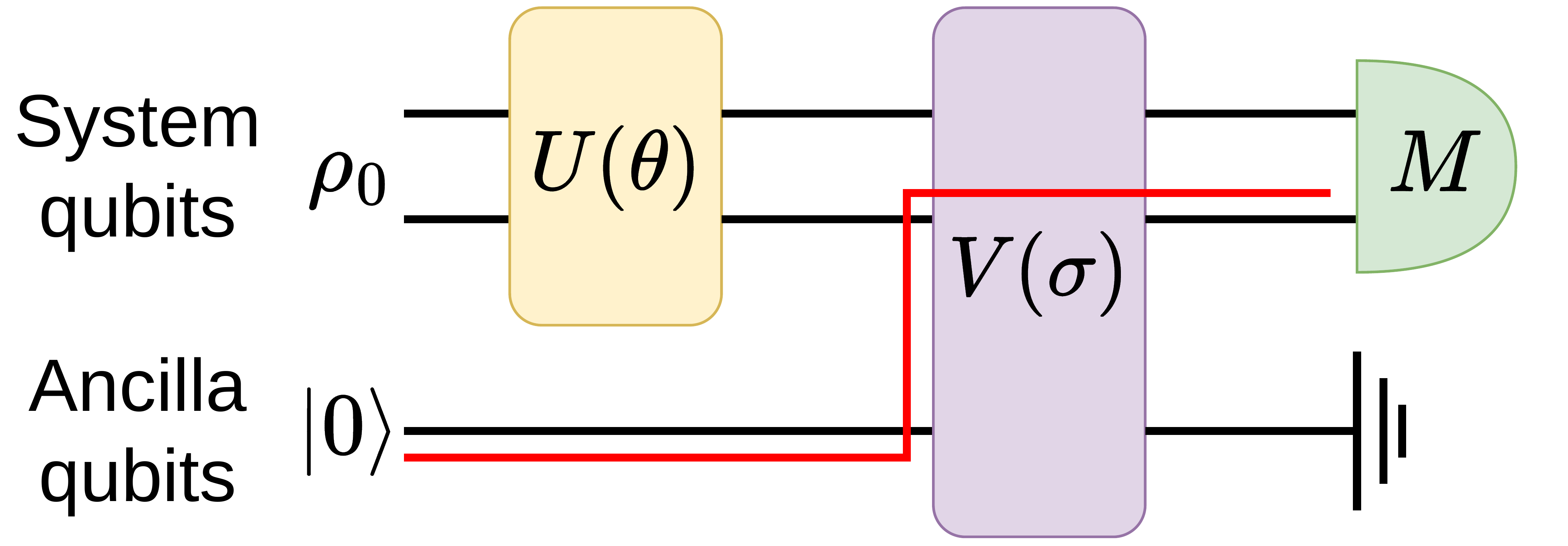}
    \caption{
    Schematic diagram of purification of a non-unitary channel
    as defined in Eqs. \ref{eqn:simplified_dpqc} and \ref{eqn:purified_dpqc}.
    The red line denotes Pauli strings in the linear sum representing
    the observable $H$, evolving into strings supported solely on $\mathcal{H}_a$.
    }
    \label{fig:purification}
\end{figure}

This setup is summarized in Fig.~\ref{fig:purification}.
The key observation is that, although $H$ starts out supported entirely on the system,
the Heisenberg evolution $\hat{V}(\sigma)^\dagger H_s \otimes I_a \hat{V}(\sigma)$
generates some Pauli terms that live \emph{entirely on the ancillas}.
These ancilla supported terms are special: their coefficients depend only on $\sigma$, and
because $U(\theta)$ acts as $U(\theta)_s\otimes I_a$, they are untouched (not scrambled) by
the preceding parameters $\theta$.

Every existing method for achieving cost anti-concentration works by engineering the
gadgets so that the sum of these ancilla supported terms has high variance,
while keeping $\hat{V}(\sigma)$ simple.
For example, $\hat{V}$ has local gates and has depth $O(\log(n))$
\cite{chen2025taming,zapusek2025scaling} or even depth $O(1)$ \cite{deshpande2024dynamic}.
However, in such settings, the $\hat{V}(\sigma)$ part of the circuit can
be analyzed efficiently with Pauli paths :

\begin{lemma}[Cost anti-concentration from ancilla supported Pauli terms]
    Let $\mathcal{U}(\theta,\sigma)$ be a purified
    DPQC as defined in Eq. \ref{eqn:simplified_dpqc}.
    Suppose that $U(\theta)$ as defined forms a 1-design over $\theta$.
    And let $\hat{V}(\sigma)$ be such that the Heisenberg evolution of
    $H_s\otimes I_a$ through $\hat{V}$ can be computed classically as a sum of
    polynomially many terms
    \begin{equation}
    \begin{split}
        \hat{V}(\sigma)^\dagger H \hat{V}(\sigma) &=\sum_{P=I_s\otimes P_a}f_P(\sigma) I_s\otimes P_a \\
        &+\sum_{Q=Q_s\otimes Q_a, Q_s\neq I_s}{f_Q(\sigma) Q_s\otimes Q_a}
    \end{split}
    \end{equation}
    Define the classically computable function
    $F(\sigma) = \sum_P f_P(\sigma) \mathrm{Tr}[P_a |0\rangle\langle0|_a]$.
    If we can design $\hat{V}(\sigma)$ such that $\mathrm{Var}_\sigma[F(\sigma)] \in
    \Omega(1/poly(n))$,
    then
    $\mathrm{Var}_{\theta,\sigma}[C(\theta,\sigma)]
        \geq \mathrm{Var}_\sigma[F(\sigma)] \in \Omega(1/poly(n))$.
    \label{lemma:pauli_path_lemma}
\end{lemma}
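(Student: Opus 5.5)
The plan is to split the cost into an ancilla-only part, which gives $F(\sigma)$, and a remainder that averages to zero over $\theta$, and then to lower-bound the total variance by the law of total variance, conditioning on $\sigma$.

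First I rewrite Eq.~\ref{eqn:purified_dpqc} in the Heisenberg picture. Since $\hat{V}(\sigma)$ acts after $U(\theta)\otimes I_a$,
\begin{equation}
C(\theta,\sigma)=\mathrm{Tr}\big[\hat{V}(\sigma)^\dagger (H_s\otimes I_a)\hat{V}(\sigma)\,\big(\rho_\theta\otimes|0\rangle\langle0|_a\big)\big],
\end{equation}
where $\rho_\theta=\hat{U}(\theta)\rho_0\hat{U}(\theta)^\dagger$. I then substitute the assumed Pauli decomposition of the evolved observable. Each term factorizes as $\mathrm{Tr}[Q_s\rho_\theta]\,\mathrm{Tr}[Q_a|0\rangle\langle0|_a]$. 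For the ancilla-only terms with $Q_s=I_s$, the system factor is $\mathrm{Tr}[\rho_\theta]=1$. This gives
\begin{equation}
C(\theta,\sigma)=F(\sigma)+G(\theta,\sigma),
\end{equation}
with $G(\theta,\sigma)=\sum_{Q_s\neq I_s} f_Q(\sigma)\,\mathrm{Tr}[Q_s\rho_\theta]\,\mathrm{Tr}[Q_a|0\rangle\langle0|_a]$. The coefficients $f_Q$ depend only on $\sigma$ and the states depend only on $\theta$, so the two dependences are cleanly separated.

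Next I use the 1-design property. Since $U(\theta)$ forms a 1-design over $\theta$, we have $\mathbb{E}_\theta[\rho_\theta]=I_s/2^n$. Hence $\mathbb{E}_\theta\mathrm{Tr}[Q_s\rho_\theta]=\mathrm{Tr}[Q_s]/2^n=0$ for every non-identity Pauli string $Q_s$. By linearity, and because the sum has only polynomially many terms so exchanging sum and expectation is harmless, $\mathbb{E}_\theta[G(\theta,\sigma)]=0$ for every fixed $\sigma$. Therefore $\mathbb{E}_\theta[C(\theta,\sigma)]=F(\sigma)$.

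Finally, I take $\theta$ and $\sigma$ to be sampled independently, as in the parameter-space averages of Sec.~\ref{sec:preliminaries}. The law of total variance then gives
\begin{equation}
\mathrm{Var}_{\theta,\sigma}[C]=\mathbb{E}_\sigma\big[\mathrm{Var}_\theta[C\,|\,\sigma]\big]+\mathrm{Var}_\sigma\big[\mathbb{E}_\theta[C\,|\,\sigma]\big]\ \geq\ \mathrm{Var}_\sigma[F(\sigma)],
\end{equation}
because the first term is non-negative. Combined with the design hypothesis $\mathrm{Var}_\sigma[F]\in\Omega(1/\mathrm{poly}(n))$, this yields the claim. The classical computability of $F$ is not needed for the inequality itself; it only makes the certificate $\mathrm{Var}_\sigma[F]$ checkable.

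The step I expect to need the most care is making the 1-design hypothesis precise. It must mean that the $\theta$-average of the channel sends $\rho_0$ (or every state) to the maximally mixed state on $\mathcal{H}_s$, and the independence of $\theta$ and $\sigma$ must be stated explicitly. I also need the ancilla-only terms to be paired with the system trace $\mathrm{Tr}[\rho_\theta]=1$ rather than the unnormalized $\mathrm{Tr}[I_s]$, so that $F$ is exactly as defined in the statement. Everything else is linearity and the variance decomposition.
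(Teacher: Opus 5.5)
Your proof is correct and matches the paper's argument. The paper also splits $C=F(\sigma)+R(\theta,\sigma)$, uses the 1-design property to get $\mathbb{E}_\theta[R\,|\,\sigma]=0$ because each $Q_s\neq I_s$ is traceless, and then drops the nonnegative $\mathbb{E}_\sigma[\mathrm{Var}_\theta[C\,|\,\sigma]]$ term in the law of total variance; your care about the independence of $\theta$ and $\sigma$ and the normalization $\mathrm{Tr}[\rho_\theta]=1$ just makes explicit what the paper leaves implicit.
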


Lemma~\ref{lemma:pauli_path_lemma} (proved in Appendix \ref{sec:methods})
provides a concrete certificate: once the circuit preceding the
gadget layer is random enough to behave like a $1$-design, we can certify
anti-concentration in $C$ simply by designing gadgets in a way that lower-bounds
the variance of the classically analyzable function $F(\sigma)$.
Note that we make the distinction between simulating a parameterized quantum circuit
with some parameter instantiation, and deriving analytical expressions
for some Pauli terms in Lemma \ref{lemma:pauli_path_lemma}.
The latter is only feasible when $\hat{V}(\sigma)$ is simple enough to analyze symbolically,
but it carries a real advantage: the gadgets can be tailored to the specific observable $H$.
To carry out such calculations we provide a software tool ``sympauli", a symbolic
``Pauli-Heisenberg evolution" engine that returns closed-form analytic expressions
$f_P(\sigma)$ given a PQC and an observable with polynomially many Pauli terms.
However, in general for a log-depth circuit with local gates the number of gates
in a local observable's light cone can be $O(\log(n)^2)$, so the complexity
of symbolic coefficients in ``sympauli" can potentially go up to $n^{\Theta(log(n))}$.
Thus if analytical expressions for coefficients of Pauli terms are intractable,
we can always fall back to numerical Monte-Carlo integration with
randomly sampled parameter instantiations for evaluating the variance
$\mathrm{Var}_\sigma[F(\sigma)]$.

The mechanism in Lemma \ref{lemma:pauli_path_lemma} also explains an asymmetry
seen in earlier works \cite{deshpande2024dynamic,zapusek2025scaling,chen2025taming}:
dynamic gadgets can restore
trainability for the parameters that come \emph{after} them ($\sigma$),
and more importantly, cannot guarantee it for the parameters that come
\emph{before} ($\theta$).
The ancilla supported terms supply variance to the cost, yet that effect is
controlled by $\sigma$ alone and can potentially never reach the upstream
$\theta$ parameters.
We note the similarity, and possibly a connection, between this phenomenon and the one
observed in \cite{mele2026noise}, namely that noisy deep circuits behave like
shallow circuits with only the last layers contributing significantly to the
observable expectation value.
Standard noise channels such as depolarizing and
amplitude damping are CPTP maps with fixed parameters, so our analysis might be
applied to noisy circuits, and in the other direction, the results of
\cite{mele2026noise} might be extended to DPQCs as well.
It also sharpens a question about the ``BP-free" noisy circuits studied in
\cite{Crognaletti2024}, namely the possibility of cost anti-concentration
as shown in their work, which, as we have shown, can occur even in the
presence of significantly many untrainable parameters.

Moreover if we let $\hat{V}(\sigma)$ get too complex, by becoming too deep for example,
it becomes harder to keep
$\mathrm{Var}_\sigma[F(\sigma)] \in \Omega(1/poly(n))$, and we risk even the
ancilla ($\mathcal{H}_a$) supported Pauli terms $P_a$ becoming too insignificant after getting
scrambled by $V$, so that $\sigma$ becomes untrainable as well.
Our numerical results in Fig.~\ref{fig:variance_lower_bound_qaoa} support this claim, and it
is consistent with Lemma~\ref{lemma:sparse_dpqc} applied to the sub-array \emph{after} the
dynamic layer as it approaches a $2$-design.
For deep $V$ it also becomes harder to certify analytically that a given gadget induces
anti-concentration at all.

This suggests a natural principle for designing BP-free DPQCs :
insert \emph{multiple} layers of dynamic gadgets, densely enough that no
sub-array can form a $2$-design (cf.\ Lemma~\ref{lemma:sparse_dpqc}),
and use Pauli path analysis to build the layers up step by step.
We note that using it for layers far from the end becomes
challenging as the number of Pauli strings may explode exponentially,
and this can be seen as the main challenge of BP-free ansatz design.
The Pauli-path algorithm of \cite{angrisani2025classically} may
be repurposed for deep stacks of \emph{locally scrambling} unitary layers,
which makes the number of Pauli terms tractable by the means of truncation.
Furthermore, note that each subarray of unitary layers in Eq. \ref{eqn:dpqc}
can effectively be ``absorbed" into the adjacent CPTP map.
Because the whole construction lives in the
purified picture, we see that designing such a BP-free dynamic circuit ultimately
reduces to designing a BP-free \emph{unitary} circuit where some subset of qubits
are initialized to $|0\rangle$.

In Sec.~\ref{sec:numerical_analysis} we confirm numerically,
for applications of interest like VQE and QAOA, that the $\theta$
parameters of an ansatz in the form of Eq. \ref{eqn:simplified_dpqc}
indeed remain untrainable for deep $U(\theta)$.

\subsection{Numerical analysis}
\label{sec:numerical_analysis}

We study two DPQCs of the form given in Eq. \ref{eqn:simplified_dpqc} and show
numerically that a circuit can have an anti-concentrated cost,
$\mathrm{Var}_{\theta,\sigma}[C(\theta, \sigma)]\in \Omega(1/poly(n))$, while its
$\theta$ parameters remain untrainable.
In other words, essentially all of the variation in $C$ comes from the $\sigma$ directions,
with only an exponentially small contribution from the $\theta$ directions.
We summarize the untrainability of an entire subset parameters at once using a
single statistic. We first rewrite the cost using the adjoint channel
$\mathcal{E}^\dagger(\sigma)$.
\begin{equation}
    C(\theta, \sigma)
        = \mathrm{Tr}\left[
        \mathcal{E}^\dagger(\sigma)[H] U(\theta)[\rho_0] \right]
    \label{eqn:adjoint_representation}
\end{equation}
Next, using the law of total variance together with the periodicity of $C$ in $\theta$, we
decompose the variance of $\partial_{\theta_\mu}C$ over the entire parameter space
$(\theta, \sigma)$ into an average of variances taken over affine slices at fixed $\sigma$,
(see Appendix~\ref{sec:methods} for derivation)
\begin{equation}
    \mathrm{Var}_{\theta,\sigma}[\partial_{\theta_\mu}C]
        = \mathbb{E}_\sigma\left[\mathrm{Var}_\theta\left[
            \partial_{\theta_\mu} C|\sigma\right]\right]
    \label{eqn:law_of_total_variance}
\end{equation}

Together, Eqs.~\ref{eqn:adjoint_representation} and \ref{eqn:law_of_total_variance} give a
practical diagnostic for BPs in affine subspaces.
For each randomly sampled $\sigma$, we estimate the sample variance of the cost differences
$C(\theta, \sigma)-C(\theta',\sigma)$ over pairs of uniformly randomly sampled
points $(\theta,\theta')$, that is
$D(\sigma)=\mathrm{Var}_{\theta,\theta'}\left[
        C(\theta, \sigma)-C(\theta',\sigma)|\sigma\right]$,
and then average over $\sigma$.
Since $\mathcal{E}^\dagger(\sigma)(H)$ is another observable,
and $U(\theta)$ is unitary, by the same argument as in
\cite{arrasmith2022equivalence}, an exponentially small value of
the average of this statistic $\mathbb{E}_\sigma[D(\sigma)]$ implies that the
partial gradient $\partial_{\theta_\mu}$ for every $\theta_\mu$ is also concentrated.
This sidesteps costly explicit gradient evaluations and, importantly, avoids automatic
differentiation libraries that only work on the full $\mathcal{H}_s\otimes\mathcal{H}_a$
space and therefore cannot cope with a large number of gadgets/ancillas.
We instead use plain state-vector simulation with Qiskit's AerSimulator \cite{qiskit2024},
reusing a single ancilla qubit by serializing all of the gadgets and resetting the
ancilla to $|0\rangle$ between gadget instantiations.

We use these techniques to numerically analyze two practical DPQC applications,
ground state preparation with VQE and solving Max-Cut with QAOA.
We test several dynamic gadgets across these applications, and observe
cost function anti-concentration in some cases but not in others.
Finally, we probe how far before the final measurement the dynamic layer can be
placed in QAOA before the cost function variance lower bound
drops to exponentially small values.

\begin{figure*}[t]
    \begin{subfigure}[t]{0.45\textwidth}
        \centering
        \includegraphics[height=2.25in]{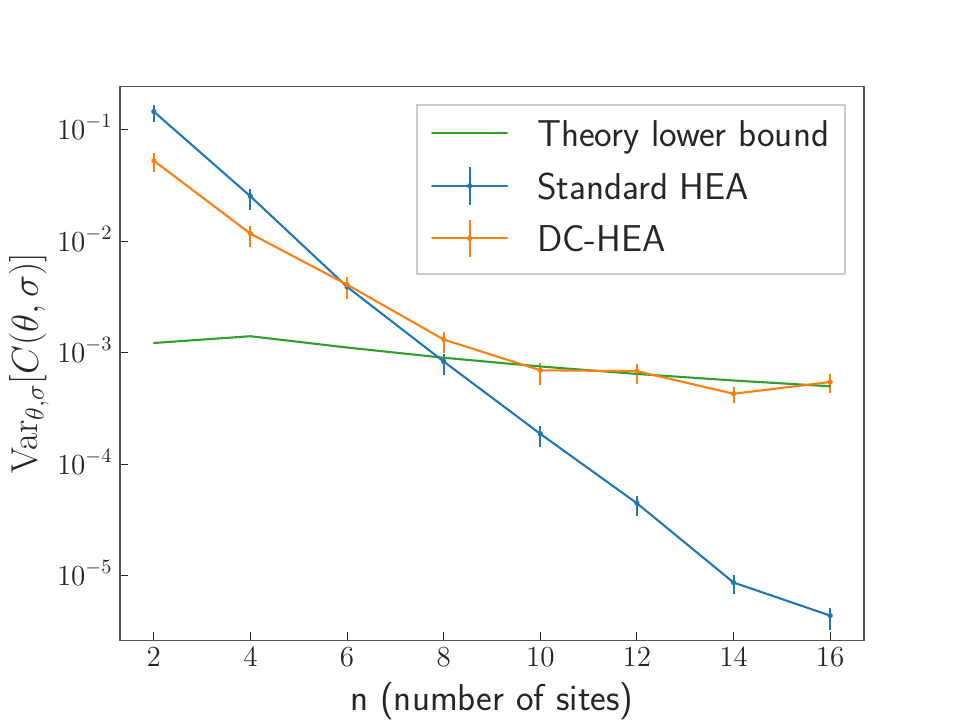}
        \caption{}
        \label{fig:cost_anticoncentration_vqe}
    \end{subfigure}
    \hspace{10pt}
    \begin{subfigure}[t]{0.45\textwidth}
        \centering
        \includegraphics[height=2.25in]{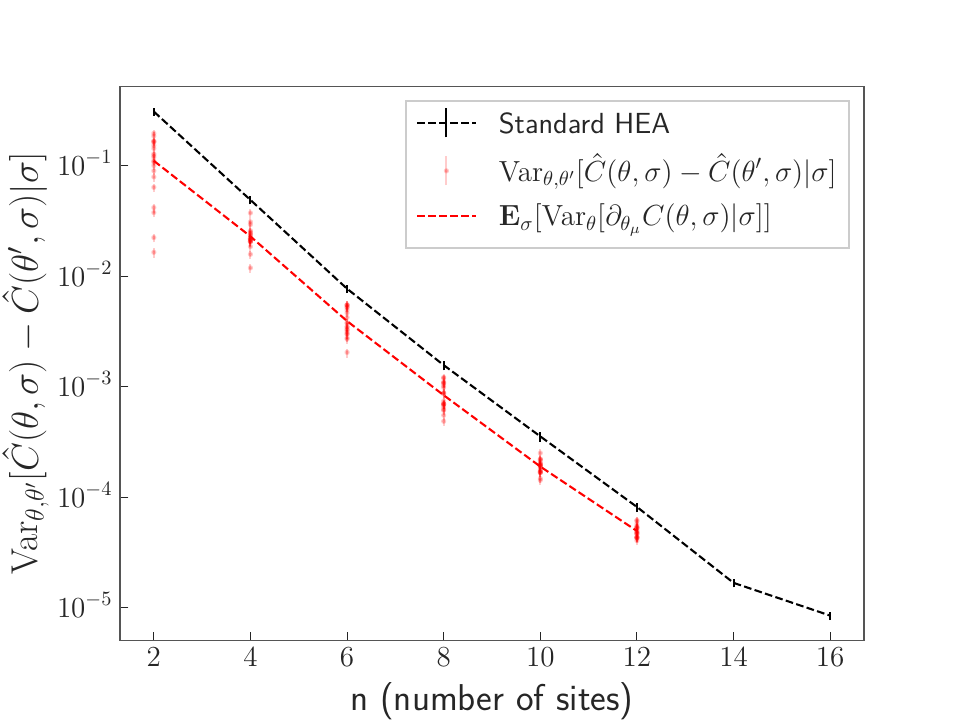}
        \caption{
        }
        \label{fig:parameter_concentration_vqe}
    \end{subfigure}
    \caption{
        VQE cost anti-concentration in the presence of untrainable parameters.
        (\ref{fig:cost_anticoncentration_vqe})
        $\mathrm{Var}_{\theta,\sigma}[C]$ as a function of $n$ for the VQE experiment.
        We can see cost function anti-concentration as the variance is lower
        bounded by $\Theta(1/n)$ (shown in Appendix \ref{sec:methods}).
        (\ref{fig:parameter_concentration_vqe})
        Estimating
        $\mathbb{E}_\sigma[\mathrm{Var}[\partial_{\theta_\mu} C(\theta,\sigma)|\sigma]]$
        by taking $20$ random samples of $\sigma$.
    }
\end{figure*}

\textit{VQE.}
Our first application is ground-state preparation of the 1D Ising Hamiltonian
\begin{equation}
    H = \frac{1}{n}\left( \sum_{i=1}^{n-1}{Z_i Z_{i+1}} + \sum_{i=1}^{n} X_i \right)
    \label{eqn:ising_hamiltonian}
\end{equation}
with VQE using a 1D linearly connected hardware-efficient ansatz (HEA)
\cite{kandala2017hardware} as $U(\theta)$ with $L=20$ layers for $n\leq16$, even $n$.
For the non-unitary part $\mathcal{E}$ we apply a single layer of the feedforward gadget
of \cite{deshpande2024dynamic} (Fig.~\ref{fig:purified_feedforward_gadget}) to each qubit,
at the end of the circuit, with an independent parameter $\sigma_i$ per site gadget.
We refer to the $\sigma=0$ circuit (unitary $U(\theta)$) as the ``Standard HEA"
and the $\sigma\neq0$ version as ``DC-HEA".

For calculating the cost function variance, we sample 200 $(\theta, \sigma)$
parameter instantiations for each data point. Each expectation value is evaluated with
$4096$ shots.
The error bars are $95\%$ confidence intervals evaluated using bootstrapping with
$10,000$ re-samples.
Fig.~\ref{fig:cost_anticoncentration_vqe} shows that the cost variance is lower-bounded by
$\Theta(1/n)$, in agreement with the bound from
Lemma~\ref{lemma:pauli_path_lemma} (derived in Appendix \ref{sec:methods}).
Yet, as Fig.~\ref{fig:parameter_concentration_vqe} shows, the partial gradients of $\theta$
still decay exponentially with $n$.
Here we calculate the statistic $D(\sigma)$ with $20$ uniformly randomly chosen
$\sigma$ values for each $n$.
The cost thus anti-concentrates while the original parameters stay untrainable,
raising concerns about the effectiveness of approaches that insert dynamic gadgets
at the end of the circuit to avoid BP like in \cite{deshpande2024dynamic}.

\textit{QAOA.}
To test the same picture under weaker assumptions: in particular dropping the
locally scrambling assumption on $U(\theta)$ from \cite{deshpande2024dynamic}
and allowing correlated parameters;
we turn to QAOA \cite{farhi2014quantum} and use it to solve the Max-Cut problem.
Max-Cut QAOA takes a graph $G(V,E)$ as input and seeks a partition of $V$ into two sets
that maximizes the number of crossing edges; each partition (cut)
is encoded as an $n=|V|$-bit string.
The fraction of edges cut by a basis state is read out by the diagonal observable
\begin{equation}
    H = \sum_{(u,v)\in E}\frac{I-Z_u Z_v}{2m}
    \label{eqn:qaoa_observable}
\end{equation}
where $m=|E|$.
In its standard formulation, the circuit is an $L$-layer ansatz given by the unitary
\begin{equation}
    \hat{U}(\theta) = 
            e^{-i \theta^{(x)}_L H_M} e^{-i \theta^{(z)}_L  H_P} ...
            e^{-i \theta^{(x)}_1 H_M} e^{-i \theta^{(z)}_1  H_P}
\end{equation}
where $H_P=\sum_{(u,v)\in E}Z_uZ_v$ is the ``problem" Hamiltonian
(which is $H$ from Eq. \ref{eqn:qaoa_observable} up to scaling and a shift)
and $H_M=\sum_{u\in V}X_u$ is the ``mixer" Hamiltonian.
The initial state is the uniform superposition
$\rho_0=|+\rangle \langle+|^{\otimes n}$, which is the ground state of $H_M$.
Measuring the final state $\hat{U}(\theta) \rho_0 \hat{U}(\theta)^{\dagger}$
in the computational basis yields candidate cuts.

\begin{figure*}[t]
    \begin{subfigure}[t]{0.45\textwidth}
        \centering
        \includegraphics[height=1.25in]{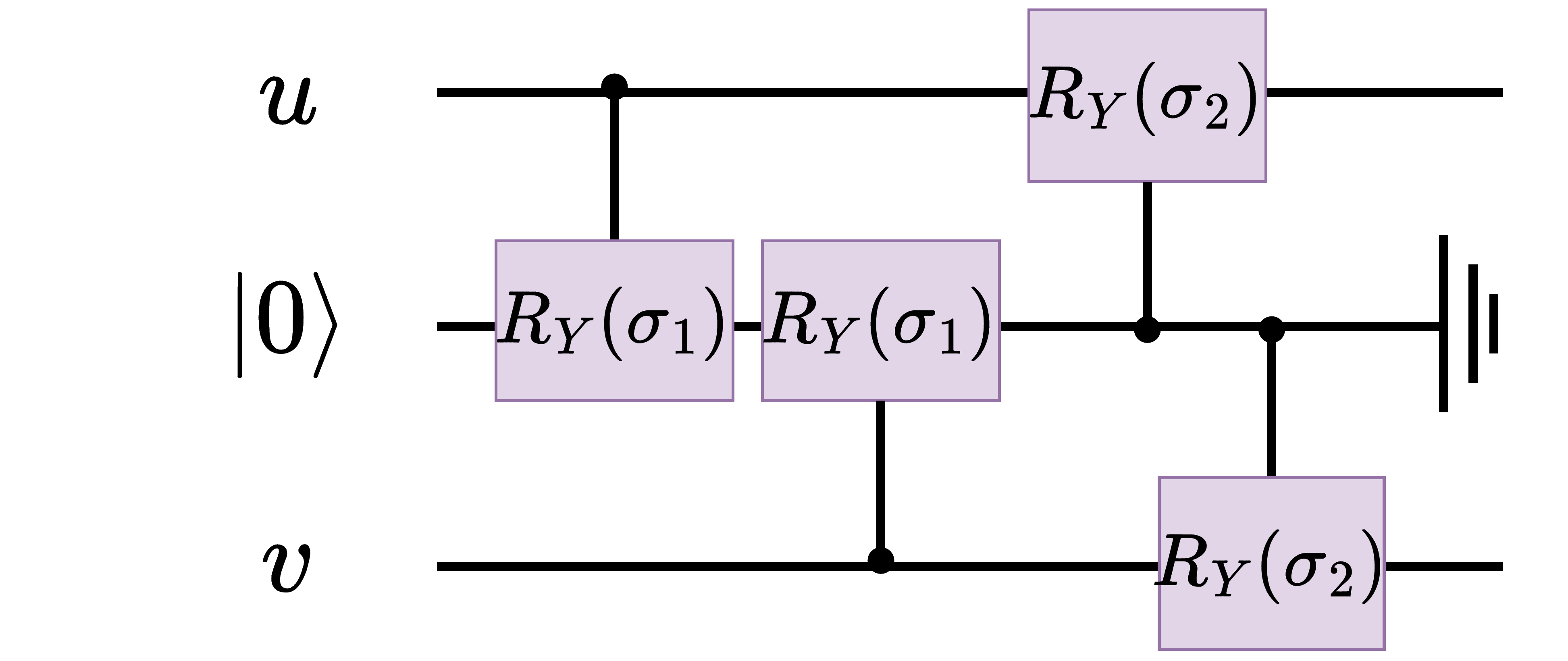}
        \caption{}
        \label{fig:qaoa_gadget_1}
    \end{subfigure}
    \hspace{10pt}
    \begin{subfigure}[t]{0.45\textwidth}
        \centering
        \includegraphics[height=1.25in]{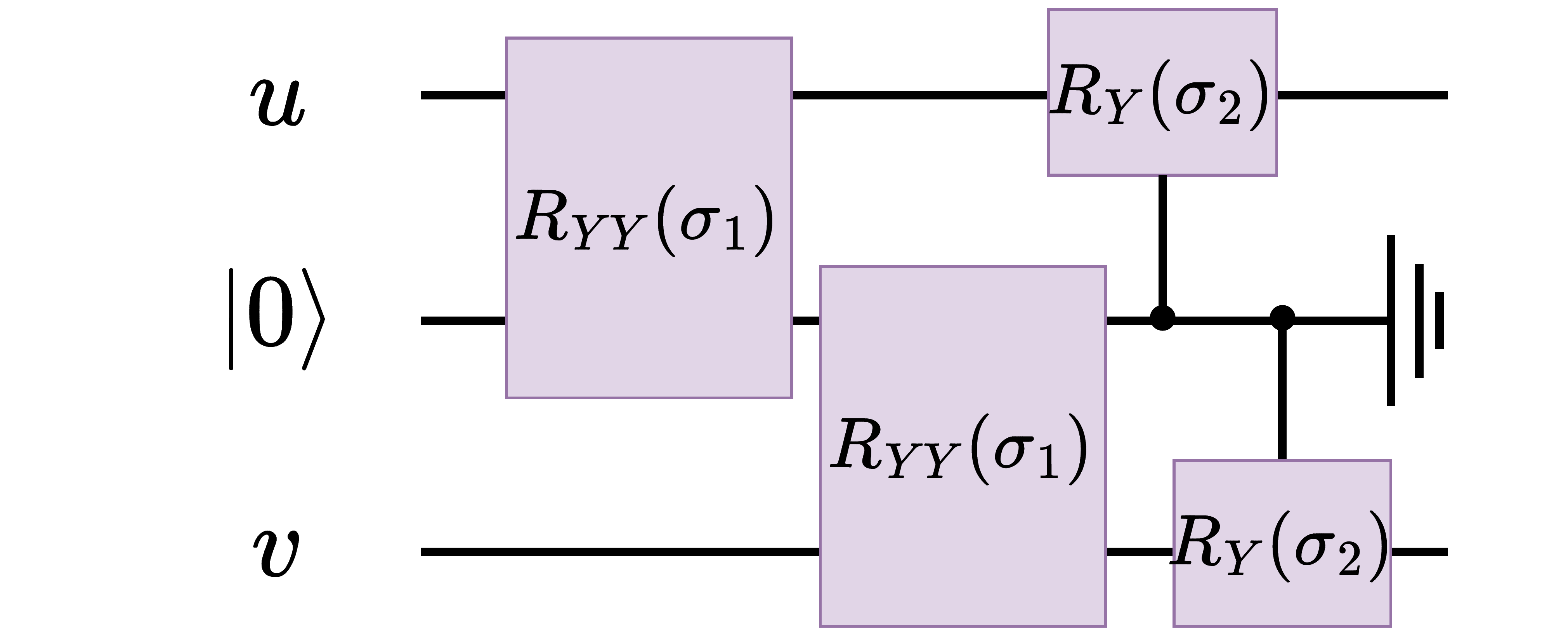}
        \caption{
        }
        \label{fig:qaoa_gadget_2}
    \end{subfigure}
    \caption{
        Max-Cut QAOA gadgets used in our experiments.
        The first pair of gates are ``entanglers", parameterized by $\sigma_1$,
        and the second pair are ``feedforwards", parameterized by $\sigma_2$.
        These gadgets are inserted between every pair of adjacent nodes $(u,v)\in E$
        in the graph $G(V,E)$.
        (\ref{fig:qaoa_gadget_1})
        The first QAOA gadget. This gadget doesn't commute with gadgets on adjacent edges.
        (\ref{fig:qaoa_gadget_2})
        In the second QAOA gadget, the entangler controlled-Y rotations from
        the first gadget are replaced with $YY$ Pauli rotation gates.
    }
    \label{fig:qaoa_gadgets}
\end{figure*}

We introduce the two kinds of edge gadgets shown in Figs.~\ref{fig:qaoa_gadget_1}
and \ref{fig:qaoa_gadget_2}, and observe cost anti-concentration when using the first
kind but not the second.
We provide an analytical explanation of this observation in Appendix \ref{sec:methods}.
A copy of the gadget $\mathcal{E}_e(\sigma_1,\sigma_2)$ is appended to every edge $e\in E$.

Gadgets of the first kind (Fig.~\ref{fig:qaoa_gadget_1}) do not commute with
gadgets on adjacent edges sharing a vertex,
so the order in which they are appended matters and induces a permutation
$\pi\in \mathcal{S}_m$ of the edges.
We remove this ordering bias by symmetrizing the overall channel across all edge
permutations,
\begin{equation}
    \mathcal{E}(\sigma) =
    \frac{1}{m!}\sum_{\pi\in\mathcal{S}_m}
    \mathcal{E}_{\pi(m)}(\sigma_1,\sigma_2)\circ...\circ\mathcal{E}_{\pi(1)}(\sigma_1,\sigma_2)
\end{equation}
which in practice (and in simulation) amounts to uniformly randomly reshuffling the edge
order every few shots.
All gadgets share the same pair of parameters, analogous to the shared parameters in standard QAOA. This choice is motivated by the observed transferability of QAOA parameters across graph instances, which can yield good approximation ratios~\cite{brandao2018fixed,shaydulin2023parameter}. Accordingly, the gadget parameters $(\sigma_1,\sigma_2)$ are shared across all edges to enable analogous parameter transferability.

For our experiments we sample Erd\H{o}s-R\'enyi graphs $G(n,\frac{1}{2})$
(each possible edge is included independently with probability one half)
and set $L=10$ for $n\leq10$,
(one graph for each even $n$).
We sample $300$ $(\theta,\sigma)$ parameter instantiations per data point,
and evaluate the expectation value for each using $m^2$ edge permutation samples,
with four shots for every permutation.
Fig.~\ref{fig:cost_anticoncentration_qaoa} shows clear cost anti-concentration;
quadrupling the number of shots per permutation,
and doubling the number of permutation samples independently,
leaves the plot essentially unchanged.
This confirms that the variance lower bound is not a sampling artifact.
To probe the effect of deepening $\hat{V}(\sigma)$
(from Eq.~\ref{eqn:purified_dpqc}), we append $f$ standard QAOA layers
$e^{-i \theta^{(x)}_j H_M} e^{-i \theta^{(z)}_j  H_P}$ after the
non-unitary channel $\mathcal{E}(\sigma)$ and plot the variance lower bound
$\min_{n} \mathrm{Var}_{\theta,\sigma}[C(\theta,\sigma)]$ as a function of $f$.
Fig.~\ref{fig:variance_lower_bound_qaoa} shows this bound decaying roughly as
$\Theta(d^{-\lambda f})$ for some constants $d>1,\lambda>0$ (evenly spaced lines on a
semi-log plot).
These results are consistent with the Pauli path observation of
Sec.~\ref{sec:pauli_path_analysis} and with the ``feedforward distance" analysis of
\cite{deshpande2024dynamic}, and hint at a general trend beyond the aforementioned
assumptions: the dynamic layer cannot be inserted far from the final observable
measurements without losing its effect; in line with Lemma ~\ref{lemma:sparse_dpqc}.

As in the VQE case, the statistic $D(\sigma)$ concentrates,
exhibiting behavior nearly identical to that shown in
Fig.~\ref{fig:parameter_concentration_vqe};
the corresponding figure is therefore omitted.
To conclude untrainability from the concentration of this statistic,
we invoke the proofs of~\cite{arrasmith2022equivalence},
which rely on the parameter-shift rules~\cite{mitarai2018quantum,wierichs2022general}.
These rules apply only when each gate carries its own free parameter,
whereas in QAOA a single parameter drives $O(\mathrm{poly(n)})$ gates.
We therefore consider a free-parameterized version of the cost,
in which every gate is assigned an independent angle; the true QAOA cost is
recovered by setting all angles driven by a common parameter equal to one another.
By the multi-variable chain rule, each true QAOA gradient is then a sum of
$O(\mathrm{poly}(n))$ free-parameter partial derivatives.
Since each such term exhibits exponentially suppressed sample variance,
so does their sum, and the concentration carries over to the true QAOA cost.

\begin{figure*}[htpb]
    \begin{subfigure}[t]{0.45\textwidth}
        \centering
        \includegraphics[height=2.25in]{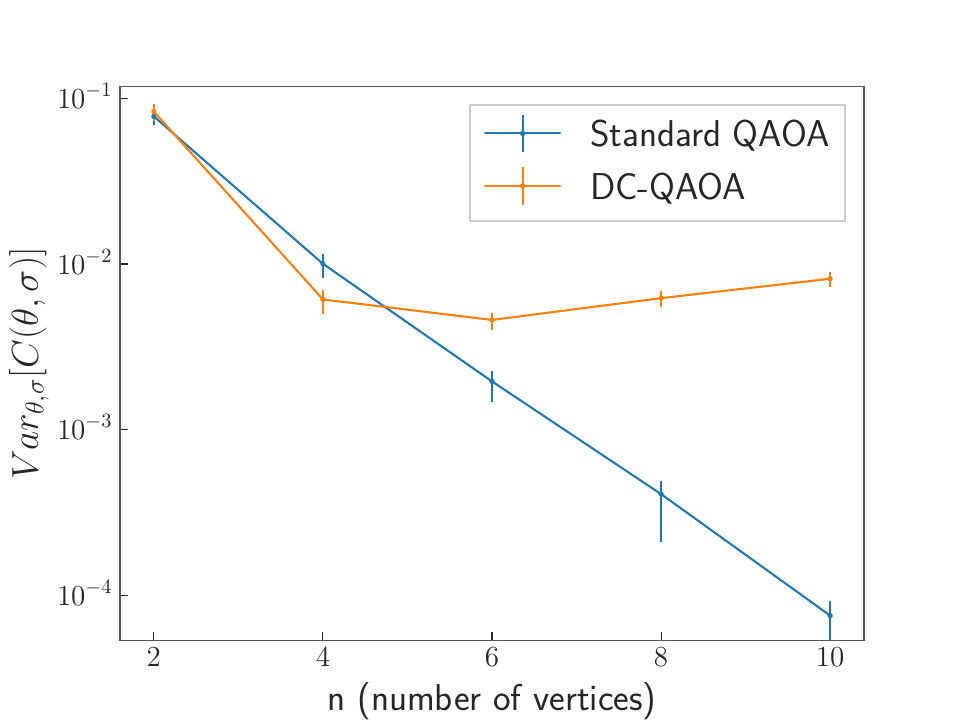}
        \caption{}
        \label{fig:cost_anticoncentration_qaoa}
    \end{subfigure}
    \hspace{10pt}
    \begin{subfigure}[t]{0.45\textwidth}
        \centering
        \includegraphics[height=2.25in]{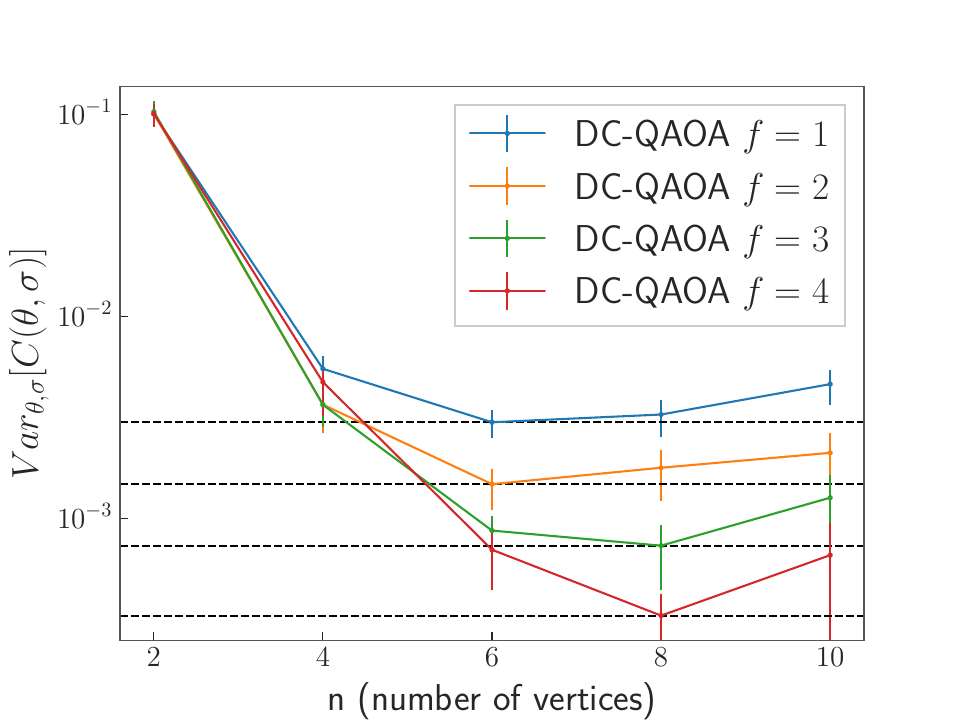}
        \caption{}
        \label{fig:variance_lower_bound_qaoa}
    \end{subfigure}
    \caption{
        (\ref{fig:cost_anticoncentration_qaoa})
        $\mathrm{Var}_{\theta,\sigma}[C]$ as a function of $n$
        for the QAOA experiment with gadgets shown in Fig.~\ref{fig:qaoa_gadget_1}.
        (\ref{fig:variance_lower_bound_qaoa})
        Variance lower bound after varying ``feedforward distance" $f$,
        which is the maximum distance between an observable and the nearest
        dynamic gadget in its light cone. In our QAOA experiment, it is the number of layers
        between the final measurement and the dynamic gadget layer.
    }
\end{figure*}

The second kind of gadget (Fig.~\ref{fig:qaoa_gadget_2}) is designed so
that it commutes with gadgets on adjacent edges. But we observe cost concentration
in the DPQC constructed from that kind of gadget layer.

\begin{figure*}[htpb]
    \begin{subfigure}[t]{0.45\textwidth}
        \centering
        \includegraphics[height=2.25in]{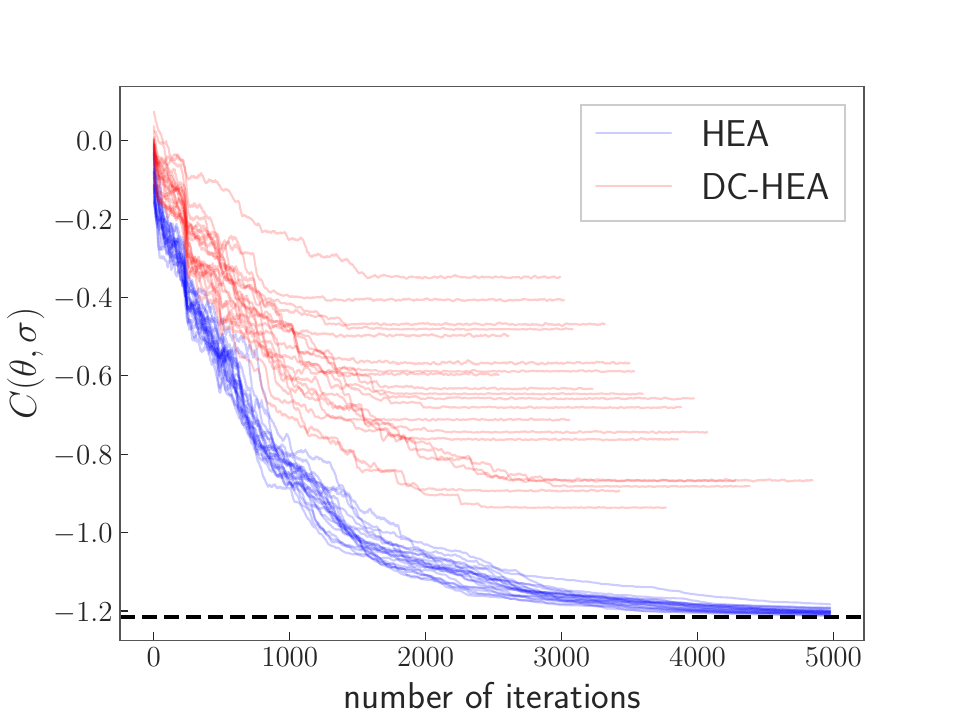}
        \caption{}
        \label{fig:hea_loss_plot}
    \end{subfigure}
    \hspace{10pt}
    \begin{subfigure}[t]{0.45\textwidth}
        \centering
        \includegraphics[height=2.25in]{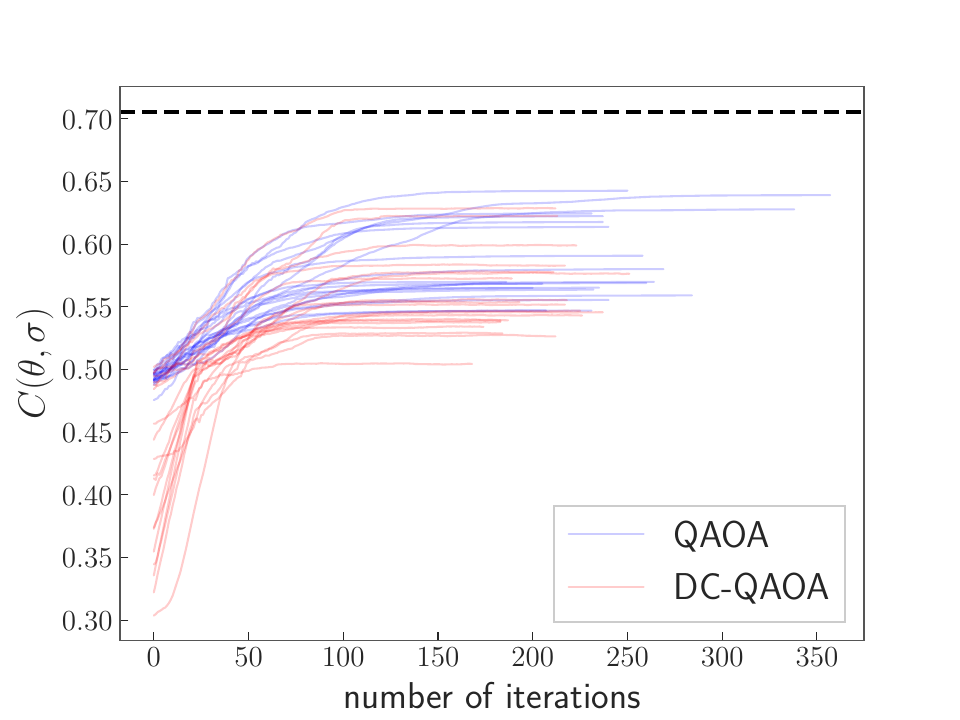}
        \caption{}
        \label{fig:qaoa_loss_plot}
    \end{subfigure}
    \caption{
        Degradation in cost function landscape due to added non-unitary channels
        $\mathcal{E}(\sigma)$.
        (\ref{fig:hea_loss_plot}) is the plot for VQE, and (\ref{fig:qaoa_loss_plot})
        is for QAOA.
        The red lines are dynamic circuit variants and the blue lines
        are the standard unitary variants.
        These plots show the optimization trajectory of 20 uniformly randomly
        initialized parameters.
        We take a moving average over 25 iterations to remove high frequency
        oscillations.
        The black dashed line represents the optimal cost.
    }
    \label{fig:loss_plots}
\end{figure*}

Finally, we train the standard and dynamic variants, ``HEA" against ``DC-HEA" on a
six-qubit VQE instance, and ``QAOA" against ``DC-QAOA" on an eight-qubit instance.
We use the gradient-free optimizer COBYLA \cite{zaikun_zhang_2023_8052655}.
As Fig.~\ref{fig:loss_plots} shows, the dynamic layer brings no improvement to the
cost landscape as seen from randomly sampled optimization trajectories.
To further evaluate the contribution of the dynamic layer, we re-evaluate the
cost after removing it post training, i.e., by setting $\sigma=0$ at inference time.
For VQE, removing $\mathcal{E}$ reduces the energy gap relative to the true ground-state
energy of the 1D Ising model by roughly $23\%\pm9\%$ over 20 random parameter optimization
results, suggesting that the dynamic layer was not helpful in finding the ground state.
For QAOA, removing the layer changes the cut fraction by less than $3\%$ on average
across 20 random initializations, suggesting that the trained dynamic layer remains
close to the identity.
Moreover, the $\theta$ gradients become untrainable for large $n$, showing behavior
similar to the VQE results in Fig.~\ref{fig:parameter_concentration_vqe}.
Therefore, ``DC-QAOA'' does not provide an improvement over standard QAOA. 

As an additional experiment, we collect the best-performing DC-QAOA parameters 
$(\theta,\sigma)$, trained on a random eight-vertex graph,
and reuse them across Erd\H{o}s--R\'enyi graphs of sizes $6$, $8$, and $10$,
with $20$ graphs of each size.
We observe approximation ratios above $80\%$ on average in each case,
suggesting that the dynamic parameters are transferable across graphs.
However, this transferability does not address the untrainability of the
$\theta$ parameters and therefore does not change the conclusion that
DC-QAOA provides no trainability advantage over standard QAOA.

\section{Discussion}
\label{sec:discussion}

We introduce DPQC as a unifying formalism that encompasses the disparate non-unitary
strategies proposed for BP mitigation, and analyze their trainability.
Our central finding is a tension between expressivity and trainability that in general cannot be resolved solely through the insertion of dynamic gadgets. Specifically, faithful DPQCs inherit the BPs of the unitary PQCs from which they are constructed, as established in Lemma~\ref{lemma:faithful_dpqc}. Moreover, even \emph{unfaithful} DPQCs that insert only $O(1)$ gadget layers leave $\Omega(L)$ parameters untrainable whenever the underlying ansatz is translationally invariant and any sub-array forms a $2$-design, as shown in Lemma~\ref{lemma:sparse_dpqc}.
Together these results rule out a broad class of existing constructions as viable BP
mitigation strategies.

The Pauli path analysis underlying Lemma~\ref{lemma:pauli_path_lemma}, carried out on the
purified channel, characterizes the effect of inserting non-unitary layers not too far
from the end of the circuit and clarifies why such constructions may fail to restore
trainability of the parameters coming before it.
Specifically, a dynamic gadget anti-concentrates the cost function values by
generating high-variance Pauli strings supported entirely on the ancilla qubits,
whose coefficients depend only on $\sigma$ and are not scrambled by $U(\theta)$.
This produces the \emph{illusion} of a trainable landscape: the cost function retains substantial variation, while the $\theta$ gradients remain exponentially suppressed.
Our numerical experiments on both 1D Ising model VQE
and graph Max-Cut QAOA confirm this mechanism:
cost anti-concentration coexists with untrainable $\theta$ parameters.
Moreover, the part of the circuit with trainable parameters is subject to
classical analytic methods like Pauli paths.
This behavior is conceptually related to recent results on noise-induced shallow circuits
\cite{mele2026noise,cerezo2025does}, where the same mechanism that suppresses BPs renders
the circuit classically simulable.
Finally, joint optimization of $(\theta, \sigma)$ yields no improvement over
standard unitary ansatzes even for small system sizes, at least not with our
gadgets/ansatzes.

\emph{Outlook}:
Our results suggest that BP mitigation via dynamic circuits is at least as hard as
designing BP-free unitary ansatzes from first principles.
At a high level, this follows from Stinespring's dilation theorem
~\cite{stinespring_dilation}, since a purified DPQC is itself a unitary circuit.
Our analysis further supports this conclusion.
To achieve genuine trainability of $\theta$ in a DPQC, one must sacrifice faithfulness and
insert gadget layers frequently enough that the augmented landscape avoids $2$-design
behavior from contiguous sub-circuits.
A promising direction is to design DPQCs whose gadget/circuit structure
is adapted to the target Hamiltonian rather than appended generically or randomly, using the
Pauli path framework developed here as a design tool.
Whether such construction can be simultaneously expressive, trainable in $\theta$, and
not classically simulable in its trainable directions remains the central open question for
the practical utility of dynamic circuits in VQAs.

\section*{Data availability}
\label{sec:data_availability}
Simulation data and the code used to generate it is freely available on GitHub
\href{https://github.com/sumeetshirgure/dynbp}{https://github.com/sumeetshirgure/dynbp}.
The symbolic ``Pauli-Heisenberg evolution" engine is also provided
as a python package ``sympauli".
\href{https://pypi.org/project/sympauli/}{https://pypi.org/project/sympauli/}

\begin{acknowledgments}
The authors would like to thank Marco Cerezo for a stimulating discussion on classical simulability.
S.S. is supported by the University of Central Florida ORCGS Doctoral Fellowship award.
This research used resources of the National Energy Research Scientific Computing Center (NERSC),
a Department of Energy User Facility using NERSC award DDR-ERCAP 0038372.
\end{acknowledgments}

\section*{Author contributions}
S.N. proposed the initial idea of investigating BP mitigation in QAOA using dynamic circuits.
S.S. proposed all of the concrete ideas, proofs, and theoretical analysis.
E.K. and S.N. provided feedback and guidance. 
The code for sympauli was written by S.S. with the help of
generative AI agent ``Claude" from Anthropic [Sonnet 4.6 accessed May 2026],
and all relevant code was reviewed and validated by S.S.
The manuscript was written and reviewed by all authors.

\section*{Competing interests}
The authors declare no competing interest.

\bibliography{ref}

\appendix
\begin{widetext}

\section{Methods}
\label{sec:methods}
In this section we present the proofs and theoretical analyses presented in the main article.
We start with the proof of lemma \ref{lemma:faithful_dpqc}.

\begin{proof}
(Lemma \ref{lemma:faithful_dpqc})
Recall that $\mathcal{U}(\theta,\sigma)$ is faithful to $U(\theta)$
if for the respective $(H,\rho_0)$, the cost function deviates uniformly pointwise
by an exponentially small amount
\begin{equation}
    |C(\theta,\sigma)-C(\theta,0)|\in O(c^{-n}) \ \ \ \ \forall \theta_\mu\in[-\pi,\pi),\
    \forall \sigma: ||\sigma|| < \epsilon \in \Theta \left(\frac{1}{poly(n)} \right)
\end{equation}
for some $c>1$.
Note that we can bound the differences in the partial gradients using the triangle inequality
\begin{align}
    |\partial_{\theta_\mu}C(\theta,\sigma)-\partial_{\theta_\mu}C(\theta,0)| \ 
    =\ \frac{1}{2}|(C(\theta+\pi\mathbf{e}_\mu/2,\sigma)-C(\theta+\pi\mathbf{e}_\mu/2,0))
     -(C(\theta-\pi\mathbf{e}_\mu/2,\sigma)-C(\theta-\pi\mathbf{e}_\mu/2,0))| \\
    \leq \frac{1}{2}|C(\theta+\pi\mathbf{e}_\mu/2,\sigma)-C(\theta+\pi\mathbf{e}_\mu/2,0)|
   +\frac{1}{2}|C(\theta-\pi\mathbf{e}_\mu/2,\sigma)-C(\theta-\pi\mathbf{e}_\mu/2,0)| \in O(c^{-n})
\end{align}
$\mathbf{e}_\mu$ is the basis vector in the $\mu$ direction,
meaning $\theta_\mu$ is shifted by an amount of $\pi/2$ in either direction.
The parameter shift finite difference formula comes from
\cite{mitarai2018quantum,wierichs2022general}.
To see that parameter shift rules apply in the presence of dynamic circuit operations,
we can think of purifying each CPTP map.
Alternatively, we can use the lemma provided in the supplemental material of
\cite{yan2025variational}.
This bound on the difference of partial gradients implies that adding the $\sigma$
parameters doesn't change the asymptotic behaviour of the partial gradients along $\theta_\mu$ in the small $\sigma$ regime.
\end{proof}

Lemma \ref{lemma:faithful_dpqc} only states untrainability of $\theta_\mu$,
but to first order, we can also similarly bound the $\sigma$ gradients.
We can do this by writing the Taylor series of
$C(\theta, \sigma+\delta)$ around $\sigma$ as
\begin{equation}
    C(\theta, \sigma+\delta) - C(\theta, \sigma) =
    \sum_\nu{\delta_\nu\partial_{\sigma_\nu}C} + O(||\delta||^2)
    \label{eqn:taylor}
\end{equation}
If any of the partial gradients $\partial_{\sigma_\nu}C$
becomes $\Omega \left(\frac{1}{poly(n)}\right)$,
then perturbing $\sigma_\nu$ by an amount $\delta_\nu$ that is itself
not exponentially small might lead to a contradiction because we can upper
bound the L.H.S from the faithfulness constraint,
and lower bound the R.H.S of Eq. \ref{eqn:taylor} by the above assumption:
\begin{align}
    |L.H.S| = |(C(\theta,\sigma+\delta)-C(\theta,0))-(C(\theta,\sigma)-C(\theta,0))| \in O(c^{-n}) \\
    |R.H.S| \approx |\sum_\nu\delta_\nu \partial_{\sigma_\nu}C(\theta,\sigma)|
        =\sum_\nu|\delta_\nu| |\partial_{\sigma_\nu}C(\theta,\sigma)|
            \in \Omega\left(\frac{1}{poly(n)}\right)
    \label{eqn:delta_choice}
\end{align}
Equation \ref{eqn:delta_choice} follows from the ability to choose the signs
of $\delta_\nu$. We cannot choose exponentially small perturbations
$\delta$ due to the resulting explosion in sampling complexity.
However, there could be cancellations if we include the second or higher order
partial derivatives along $\sigma$.
The analysis of these higher order derivatives requires further assumptions
about the structure of the dynamic gadgets, which we aim to avoid
to keep the statement generic, and hence Lemma \ref{lemma:faithful_dpqc} only
deals with $\theta_\mu$ partial gradients. But this first order analysis
only strengthens the notion of using the $\sigma$
directions to encode the solution in the extended landscape.

Next we look at the variance decomposition formula used in the following
proofs of lemmas \ref{lemma:sparse_dpqc} and \ref{lemma:pauli_path_lemma}.
\begin{proof}
(Variance decomposition in equation \ref{eqn:law_of_total_variance})
From the law of total variance we have :
\begin{equation}
    \mathrm{Var}_{\theta,\sigma}[\partial_{\theta_\mu}C]
        = \mathbb{E}_\sigma[\mathrm{Var}_\theta[\partial_{\theta_\mu}C|\sigma]] +
            \mathrm{Var}_\sigma[\mathbb{E}_\theta[\partial_{\theta_\mu}C|\sigma]]
\end{equation}
Since $C$ is periodic in $\theta$, the last term (variance of conditional expectations)
vanishes because the anti-derivative of the partial gradient of $C$
is evaluated at identical endpoints.
\end{proof}

For an introduction to t-designs and Haar measure tools, which are required by the
proofs of lemmas \ref{lemma:sparse_dpqc} and \ref{lemma:pauli_path_lemma},
we refer the reader to \cite{mele2024introduction}.

\begin{proof}
(Lemma \ref{lemma:sparse_dpqc})
Factorize $\mathcal{U}(\theta,\sigma)$ as
\begin{equation}
    \mathcal{U}(\theta,\sigma)=\mathcal{E}_R(\phi_R) \circ
    U_{v_j}(\theta_{v_j})...U_{v_{j-1}+1}(\theta_{v_{j-1}+1})
    \circ \mathcal{E}_L(\phi_L)
\end{equation}
by absorbing the channels before (after) the
$\hat{U}_{w_j}(\theta_{w_j}) = \hat{U}_{v_j}(\theta_{v_j})...\hat{U}_{v_{j-1}+1}(\theta_{v_{j-1}+1})$
block into a single map $\mathcal{E}_L$ ($\mathcal{E}_R$),
and collecting the respective parameters into $\phi_L$ ($\phi_R$).

Let $H_R=\mathcal{E}^\dagger_R(\phi_R)[H]$ and
let $\rho_L=\mathcal{E}_L(\phi_L)[\rho_0]$ for some parameter instantiations.
Consider any parameter $\theta_\mu$ in $\hat{U}_{w_j}$
parameterizing the gate
$\hat{U}_\mu(\theta_\mu)=e^{-i \frac{\theta_\mu}{2}P_\mu}\hat{W}_\mu$ with $P_\mu^2=I$.
Factorize the blocks before and after this gate as $\hat{U}_{w_j}=\hat{U}_+\hat{U}_\mu \hat{U}_-$.
By the sub-array 2-design assumption, at least one of $\hat{U}_+$ or $\hat{U}_-$
has length $\Omega(L)$ and is hence a 2-design.
Consider the variance decomposition
$\mathrm{Var}_{\theta_{w_j},\phi_L,\phi_R}[\partial_{\theta_\mu}C]
=\mathbb{E}_{\phi_L,\phi_R}[\mathrm{Var}_{\theta_{w_j}}
[\partial_{\theta_\mu}C|\phi_L,\phi_R]]$.
We adapt the calculation from Equation 7 in \cite{mcclean2018barren}
and plug in $H_R$ and $\rho_L$ into the formulas to bound
this conditional variance for the three cases :

\begin{equation}
    \mathrm{Var}_{\theta_{w_j}}[\partial_{\theta_\mu}C|\phi_L,\phi_R] \approx
    \begin{cases}
        \frac{\mathrm{Tr}[\rho_L^2]}{4^n-1}
            \mathbb{E}_{u\sim \hat{U}_+}\left[||[P_\mu,u^\dagger H_R u]||_F^2\right]
        & \text{if $\hat{U}_-$ forms a 2-design} \\ \\
        \frac{\mathrm{Tr}[H_R^2]}{4^n-1}
            \mathbb{E}_{u\sim \hat{U}_-}[||[P_\mu,u \rho_L u^\dagger]||_F^2]
        & \text{if $\hat{U}_+$ forms a 2-design} \\ \\
        \frac{\mathrm{Tr}[H_R^2]\mathrm{Tr}[\rho_L^2]\mathrm{Tr}[P_\mu^2]}{2^{3n-1}}
        & \text{both $\hat{U}_+$ and $\hat{U}_-$ form 2-designs}
    \end{cases}
\end{equation}

where $||A||_F^2=\mathrm{Tr}[A^\dagger A]$ denotes the Frobenius\
norm derived from the Hilbert-Schmidt inner product.
Note that $\mathrm{Tr}[\rho_L^2] \leq 1$ for all cases.
We bound the Frobenius norm of the commutator
$||[A,B]||_F^2 \leq 4 ||A||_F^2 ||B||_{op}^2$, where $||.||_{op}$ denotes
operator norm \cite{bhatia97}.
For the first case, we can bound $||[P_\mu,u^\dagger H_R u]||_F^2$ by
$4 ||P_\mu||_F^2 ||u^\dagger H_R u||_{op}^2$.
$||P_\mu||_F^2=\mathrm{Tr}[P_\mu^2]=\mathrm{Tr}[I]=2^{n}$.
$||u^\dagger H_R u||_{op}^2=||H_R||_{op}^2 \leq ||H||_{op}^2$,
where the last inequality follows from the fact that the adjoint
map $\mathcal{E}_R^\dagger$ is unital for any CPTP map $\mathcal{E}_R$,
and unital maps are contractive in the operator norm \cite{bhatia97}.
(This property of unital maps when acting on a Hermitian operator $A$ can be
proven by writing the PSD matrix inequalities $-||A||_{op} I\leq A \leq ||A||_{op} I$
and applying the unital map on all three sides and finally using the defining fact
of unital maps: mapping the identity operators on both ends back to identity.)
So the conditional variance is upper bounded by
$\frac{4\cdot 2^{n} ||H||_{op}^2}{4^n-1} \in O\left(\frac{||H||_{op}^2}{2^n}\right)$.
For the second case, we first apply the commutator bound such that
$||[P_\mu,u \rho_L u^\dagger]||_F^2
    \leq 4 ||P_\mu||_{op}^2||u \rho_L u^\dagger||_F^2 = 4||\rho_L||_F^2 \leq 4$.
Then we bound $||H_R||_F^2 \leq 2^n||H_R||_{op}^2 \leq 2^n ||H||_{op}^2$
giving us an upper bound of $O\left(\frac{||H||_{op}^2}{2^n}\right)$ for the conditional variance.
For the third case we again get $O\left(\frac{||H||_{op}^2}{2^n}\right)$ using the substitutions above.
\end{proof}

\begin{proof}
    (Lemma \ref{lemma:pauli_path_lemma})
    Write the cost function as
    \begin{equation}
        C(\theta, \sigma)
        = \mathrm{Tr}\left[
        \left( \sum_P{f_P(\sigma)P} \right) \cdot
        (\rho(\theta)_s\otimes|0\rangle\langle0|_a)
        \right]
    \end{equation}
    where $\rho(\theta)=\hat{U}(\theta) \rho_0 \hat{U}(\theta)^\dagger$.
    Separating the ancilla supported terms $I_s\otimes P_a$ from the rest we get :
    $C(\theta, \sigma) = F(\sigma) + R(\theta, \sigma)$
    where $F(\sigma) = \sum_{P_a} f_{P_a}(\sigma) \mathrm{Tr}[P_a |0\rangle\langle0|_a]$
    and
    \begin{equation}
        R(\theta,\sigma) =
        \sum_{Q=Q_s\otimes Q_a,Q_s\neq I_s}{
            f_Q(\sigma)
            \mathrm{Tr}[Q_s \rho(\theta)]
            \mathrm{Tr}[Q_a |0\rangle\langle0|_a]
        }
        \label{eqn:pauli_path_residual}
    \end{equation}
    Next, we observe that by the 1-design property of $U(\theta)$,
    \begin{equation}
        \mathbb{E}_\theta[\mathrm{Tr}[Q_s \hat{U}(\theta) \rho_0 \hat{U}^\dagger(\theta)]]
        = \frac{\mathrm{Tr}[Q_s]\mathrm{Tr}[\rho_0]}{2^n} = 0
    \end{equation}
    because $Q_s\neq I_s$ is traceless.
    $\implies \mathbb{E}_\theta[R(\theta,\sigma)|\sigma]=0$.
    Finally, we can lower bound the variance of $C$ as
    \begin{equation}
        \begin{split}
            \mathrm{Var}_{\theta,\sigma}[C(\theta,\sigma)]
            &= \mathrm{Var}_\sigma[\mathbb{E}_\theta[C(\theta, \sigma)|\sigma]]
             +\mathbb{E}_\sigma[\mathrm{Var}_\theta[C(\theta, \sigma)|\sigma]] \\
            &\geq \mathrm{Var}_\sigma[\mathbb{E}_\theta[C(\theta, \sigma)|\sigma]] \\
            &= \mathrm{Var}_\sigma[F(\sigma)+\mathbb{E}_\theta[R(\theta,\sigma)|\sigma]] \\
            &= \mathrm{Var}_\sigma[F(\sigma)]
        \end{split}
    \end{equation}
\end{proof}

Thus, the cost is anti-concentrated if the sum of ancilla supported Pauli
term coefficients is anti-concentrated.
If we further assume that $U(\theta)$ forms a 2-design as well, then
we can improve our lower bound in terms of
$\mathbb{E}_\sigma[F(\sigma)+\mathrm{Var}_\theta[R(\theta,\sigma)]]$
and $f_Q(\sigma)$ by using the 2-design variance formula for the residual
term $R(\theta, \sigma)$ in equation \ref{eqn:pauli_path_residual}.

Next we provide a theoretical lower bound on cost function variance
using Pauli path analysis and with the help of lemma \ref{lemma:pauli_path_lemma}
on our results from section \ref{sec:numerical_analysis}.

Consider the feedforward gadget from \cite{deshpande2024dynamic}
(Fig.~\ref{fig:purified_feedforward_gadget}).
Let $U=\frac{I-iX}{\sqrt{2}}X$, and $R=\exp(-i \sigma X / 2)$.
Order the three qubits as in the figure,
where the first and the last qubits are ancillas and the middle one is the data qubit.
The first gate is $R\otimes I \otimes I$, followed by $\mathrm{CCU}$
(a multi-controlled unitary $U$) followed by $\mathrm{CSWAP}$.
We are interested in the backward evolutions of the Ising Hamiltonian terms
of the form $I\otimes P \otimes I$, where $P\in\{X,Z\}$.
Note that the gadgets across sites commute because they act on disjoint sets
of qubits so we can take the product of the evolved Pauli terms
for the two-body interactions of the form $Z_i Z_{i+1}$ in Eq. \ref{eqn:ising_hamiltonian}.
Denote $\Pi_j=|j\rangle\langle j|$.
First let's conjugate $H_0=I\otimes P\otimes I$ by
$\mathrm{CSWAP} = \Pi_0\otimes I\otimes I+\Pi_1\otimes \mathrm{SWAP}$
to get $H_1$
\begin{equation}
    H_1 = \mathrm{CSWAP}^\dagger H_0 \mathrm{CSWAP}
        = \Pi_0 \otimes P \otimes I + \Pi_1 \otimes I \otimes P
\end{equation}
Next we conjugate $H_1$ with
$\mathrm{CCU} = (I-\Pi_1\otimes\Pi_1)\otimes I + \Pi_1\otimes\Pi_1\otimes U$
to get $H_2$ :
\begin{equation}
    \begin{split}
        H_2 &= \mathrm{CCU}^\dagger H_1 \mathrm{CCU} \\
            &= \Pi_0 \otimes P \otimes I
            + \mathrm{CCU}^\dagger (\Pi_1 \otimes I \otimes P) \mathrm{CCU} \\
            &= \Pi_0 \otimes P \otimes I
            + \Pi_1\otimes\Pi_0\otimes P
            + \Pi_1 \otimes \Pi_1 \otimes (U^\dagger P U)
    \end{split}
\end{equation}
Finally, we conjugate $H_2$ with $R\otimes I \otimes I$ to get $H_3$
\begin{equation}
    \begin{split}
        H_3 &= (R\otimes I \otimes I )^\dagger H_2 (R\otimes I \otimes I) \\
            &= (R^\dagger \Pi_0 R) \otimes P \otimes I
            +  (R^\dagger \Pi_1 R) \otimes [\Pi_0 \otimes P + \Pi_1\otimes(U^\dagger P U)]
    \end{split}
\end{equation}
Writing the projectors as $\Pi_j=\frac{I+(-1)^jZ}{2}$, and collecting Pauli terms
with identity on the middle qubit we have
\begin{equation}
    H_a = \frac{1}{2} (R^\dagger \Pi_1 R) \otimes I \otimes [P +(U^\dagger P U)]
\end{equation}
The coefficients of $H_a$ after projecting to the $|0\rangle$
subspace of the ancillas are given by
\begin{equation}
    f(\sigma) = \frac{1}{2} \mathrm{Tr}[\Pi_0(R^\dagger \Pi_1 R)\Pi_0]
                \mathrm{Tr}[\Pi_0 (P+U^\dagger P U) \Pi_0]
\end{equation}
For $P=X$, $f(\sigma)=0$, and for $P=Z$,
$f(\sigma) = \frac{1}{2}\sin^2(\frac{\sigma}{2})$.
This gives us $F(\sigma)$ as defined in lemma \ref{lemma:pauli_path_lemma}
for the 1D Ising Hamiltonian as follows:
\begin{equation}
    F(\sigma) = \frac{1}{4n}
    \sum_{i=1}^{n-1}
    {\sin^2 \left( \frac{\sigma_i}{2} \right) \sin^2 \left(\frac{\sigma_{i+1}}{2} \right)}
\end{equation}
We can calculate the variance of $F(\sigma)$ by defining i.i.d random variables
$x_i = \sin^2(\sigma_i/2)$ with $\mathbb{E}[x_i] = 1/2, \mathbb{E}[x_i^2] = 3/8$
and so on when $\sigma_i\in[-\pi,\pi)$. Doing the calculation we get
\begin{equation}
    \mathrm{Var}_\sigma[F(\sigma)] = \frac{9n-13}{1024n^2} \sim \Theta(1/n)
\end{equation}
which is shown in Fig.~\ref{fig:cost_anticoncentration_vqe}.
As we can see, this lower bound is tight for $n \geq 10$.

Lemma \ref{lemma:pauli_path_lemma} allows us to reason about shallow circuits
coming after the dynamic layer.
But it can also be helpful in reasoning about deep circuits in some cases.
We give a partial analysis of why the
two kinds of Max-Cut QAOA gadgets in Fig.~\ref{fig:qaoa_gadgets} show different
behavior when it comes to cost anti-concentration.

In the first gadget in Fig.~\ref{fig:qaoa_gadget_1}, we can consider
the last edge $(u,v)$ in the random shuffle.
There is a corresponding $-Z_u Z_v/2m$ term in the Hamiltonian.
Evolving it backwards through the corresponding edge gadget gives us
two terms $I_u I_v X_e$ and $I_u I_v Z_e$ of which the second term
is not annihilated by the $\Pi_0$ ancilla projector. The corresponding
coefficient is as follows:
\begin{equation}
    f(\sigma) = 
    \frac{1}{2m}
    \left[\frac{1}{2}- \sin^2\left(\frac{\sigma_1}{2}\right)\right]
    \sin^2\left(\frac{\sigma_1}{2}\right)
    \sin^2\left(\sigma_2\right)
\end{equation}
This was evaluated using ``sympauli", our software tool.
Even though there are other terms in $F(\sigma)$, unless they
catastrophically cancel this term, we can expect
$\mathrm{Var}_\sigma[F(\sigma)]$ to be polynomially lower bounded.
And it is what we observe in section \ref{sec:numerical_analysis}.

For the second case in Fig.~\ref{fig:qaoa_gadget_2},
we can again use sympauli to study the Pauli paths.
We can order the gadgets arbitrarily because they commute,
so let's again choose an order that ends on an edge $e$.
We enumerate all possible $4^3$ Pauli strings on
$\mathcal{H}_e\otimes\mathcal{H}_u\otimes\mathcal{H}_v$
and evolve them through the gadget and only observe
a purely ancilla supported term when the Pauli
string is of the form $P_e\otimes P_u\otimes P_v$, $P_u,P_v\in\{I,Y\}$.
And note that such a Pauli term shouldn't have
support on any other vertex qubit for it to be counted in $F(\sigma)$.
Since the Hamiltonian only has $I_e\otimes Z_u \otimes Z_v$ terms,
and evolving them through the gadget once doesn't give us terms with
$Y_u$ or $Y_v$ (which can be verified using sympauli),
we don't get terms of this form towards the end of the circuit.
However we note that it becomes difficult to analyze the propagation
of these terms as the relevant subgraph becomes larger.
\end{widetext}

\end{document}